\newtheorem{definition}{Definition}
\newtheorem{lemma}{Lemma}
\newtheorem{corollary}{Corollary}
\newtheorem{proposition}{Proposition}
\newtheorem{theorem}{Theorem}
\newtheorem{remark}{Remark}
\newtheorem{assumption}{Assumption}
\newtheorem{problem}{Problem}
\begin{document}

\title{Computing Probabilistic Controlled Invariant Sets}
\author{Yulong Gao, Karl H. Johansson, \emph{Fellow}, IEEE and Lihua Xie, \emph{Fellow}, IEEE
\thanks{This work of Y. Gao and K. H. Johansson is supported by the Knut and Alice Wallenberg Foundation, the Swedish Strategic Research Foundation, and the Swedish Research Council.}
\thanks{Y. Gao and K. H. Johansson are with Division of Decision and Control Systems, KTH Royal Institute of Technology, Stockholm 10044, Sweden
        {\tt\small yulongg@kth.se, kallej@kth.se}}%
\thanks{Y. Gao and L. Xie are with
School of Electrical and Electronic Engineering, Nanyang Technological University, 639798, Singapore
        {\tt\small ygao009@e.ntu.edu.sg, elhxie@ntu.edu.sg}}
}

%Manuscript received XXXX, XXXX; revised XXXX, XXXX; accepted XXXX, XXXX. Date of publication XXXX, XXXX; date of current version XXXX, XXXX.
%\markboth{IEEE Transactions on Automatic Control,~Vol.~XX, No.~XX, XX~XXXX}%
%{GAO \MakeLowercase{\textit{et al.}}: Self-triggered control for constrained systems via reachability analysis}

\maketitle

\begin{abstract}
This paper investigates stochastic invariance for control systems through probabilistic controlled invariant sets (PCISs).  As a natural complement to robust controlled invariant sets~(RCISs), we propose finite- and infinite-horizon PCISs, and explore their relation to RICSs. We design iterative algorithms to compute the PCIS within a given set. For systems with discrete spaces, the computations of the finite- and infinite-horizon PCISs at each iteration are based on linear programming and mixed integer linear programming, respectively. The algorithms are computationally tractable and terminate in a finite number of steps. For systems with continuous spaces, we show how to discretize the spaces and prove the convergence of the approximation when computing the finite-horizon PCISs. In addition, it is shown that an infinite-horizon PCIS
can be computed by the stochastic backward reachable set from the RCIS contained in it. These PCIS algorithms are applicable to practical control systems.
Simulations are given to illustrate the effectiveness of the theoretical results for motion planning.
\end{abstract}

\begin{IEEEkeywords}
stochastic control systems, reachability analysis, probabilistic controlled invariant set (PCIS)
\end{IEEEkeywords}

% Can use something like this to put references on a page
% by themselves when using endfloat and the captionsoff option.
\ifCLASSOPTIONcaptionsoff
  \newpage
\fi

\section{Introduction}
\subsection{Motivation and Related Work}
Invariance is a fundamental concept in systems and
control~\cite{Bertsekas(1972),Blanchini(1999),Blanchini(2007)}.  A controlled invariant set captures the region where the states can be maintained by some admissible control inputs. Robust controlled invariant sets (RCISs) are defined for control systems with bounded external disturbances and address the invariance despite any realization of the disturbances. In the past decades, there have been lots of research results on RCISs and their computations~\cite{Rakovic(2005),Rungger(2017),Gilbert(1991)}. This paper studies probabilistic controlled invariant sets (PCISs), which is a natural complement to RCISs suitable in many applications. A PCIS is a set within which the controller is able to keep the system state with a certain probability. Such sets not only alleviate
the inherent conservatism of RCISs by allowing probabilistic  violations but also enlarge the applications of RCISs by being able to address unbounded disturbances. The study of PCISs is motivated by safety-critical control \cite{Mitchell(2013)}, stochastic model predictive control (MPC)~\cite{Mesbah(2016),Cannon(2011)},  reliable control~\cite{Hernandez(2016),Hernandez(2017)}, and relevant applications, e.g., air traffic management systems~\cite{HuJ(2005),Ding(2013)} and motion planning~\cite{Burlet(2004)}.

A question at the heart of this paper is
 \begin{center}
  \emph{Given a set $\mathbb{Q}$ and a parameter $0\leq \epsilon\leq 1$, how to compute a set $\tilde{\mathbb{Q}}\subseteq \mathbb{Q}$ that is invariant with probability $\epsilon$?}
 \end{center}
 To the best of our knowledge, this question  has not been explored up to now.
 One essential component in iterative approaches on computing RCISs is to compute the robust backward reachable set, in which each state can be steered to the current set by an admissible input for all possible uncertainties~\cite{Rakovic(2005),Rungger(2017),Gilbert(1991)}. The PCIS computation in this paper follows the same idea, but the robust backward  reachable set is replaced with the stochastic backward reachable sets which require different mathematical tools. Some challenges related to such an approach should be highlighted: (i) how to make it tractable to compute the stochastic backward reachable set, in particular for systems with continuous spaces; (ii) how to mitigate the conservatism when characterizing the  stochastic backward reachable set subject to the prescribed probability; (iii) how to guarantee convergence of the iterations.

\begin{table*}[htbp]
 \centering
 \label{roadmap}
   \caption{Comparisons between this paper and other work}
 \begin{tabular}{c|c|c|c|c|c}
 \hline
~ & System & Invariant Set & Control & Horizon
&
Computation\\

\hline

 This paper & \makecell{Markov controlled process} & PCIS  & Yes & \makecell{Finite and \\ infinite horizons} & \makecell{Iteration based on \\ stochastic backward reachable set}\\

\hline

\cite{Pola(2006)} & \makecell{Nonlinear stochastic system} & PCIS  & Yes & \makecell{Finite and \\ infinite horizons} & No\\

\hline

\cite{Cannon(2009)} & \makecell{Linear stochastic system} & PCIS  & Yes & \makecell{One step} & \makecell{Ellipsoidal approximation}\\

\hline
\cite{Kofman(2012)} & \makecell{Linear stochastic system} & \makecell{Probabilistic  invariant set}  & No & \makecell{Infinite horizon} & \makecell{Polyhedral approximation \\ based on   Chebyshev’s inequality} \\

 \hline
\end{tabular}
% \begin{tablenotes}
%     \item[1] SOCP: Stochastic optimal control problem.
%   \end{tablenotes}
\end{table*}

Controlled invariant sets have recently been extended to stochastic systems.  In  \cite{Battilotti(2003)}, a target set, which is similar to the PCIS of this paper, is used to define stabilization in probability. In \cite{Hernandez(2016)}, a reliable control set, another similar notion to a PCIS, is used to guarantee the reliability of Markov-jump
linear systems. The reliability is further studied for such systems with bounded disturbances in \cite{Hernandez(2017)}.
A definition of PCIS for nonlinear systems is provided in \cite{Pola(2006)}  by using reachability analysis. It is later applied to portfolio optimization~\cite{Pola(2012)}.
Another definition of probabilistic invariance  originates from stochastic MPC~\cite{Cannon(2009)} and captures one-step invariance.
In~\cite{Cannon(2009)}, an ellipsoidal approximation is given for linear systems with specific uncertainty structure. Similar invariant sets are used
in~\cite{Nguyen(2018)} to construct a convex lifting function for linear stochastic control systems. A definition of a probabilistic invariant set is proposed in~\cite{Kofman(2012),Kofman(2016)} for linear stochastic systems without control inputs. This definition captures the probabilistic inclusion of the state at each time instant. A recent work \cite{Hewing(2018)} explores the correspondence between probabilistic and robust invariant sets for linear systems.
 In~\cite{Kofman(2012),Kofman(2016)}, polyhedral probabilistic invariant sets are approximated by using Chebyshev's inequality for linear systems with Gaussian noise. Recursive satisfaction is usually computationally intractable for general stochastic control systems.

The results of this paper build on the above work but make significant additions and improvements. Table~I summarizes the comparison between our work and the most relevant literature. (i) All the above references focus on some specific stochastic systems (e.g., linear or one-dimensional affine nonlinear systems) or on some specific class of  stochastic disturbances (e.g., Gaussian or state-independent noise). In our model, we consider general Markov controlled processes, which  include general system dynamics and stochastic disturbances. (ii) Different from~\cite{Kofman(2012),Kofman(2016)}, our invariant sets are defined based on trajectory inclusion as in~\cite{Pola(2006)} and, particularly, incorporate control inputs constrained by a compact set. An accompanying question is how to find an admissible control input when verifying or computing a PCIS.
(iii)  The PCISs in this paper are different from the maximal probabilistic safe sets in \cite{Abate(2007)}. Every trajectory in a PCIS is required by our definition to admit the same probability level, which does not hold for the maximal probabilistic safe set. (vi) The stochastic reachability analysis studied in \cite{Abate(2007)} provides  an important tool for maximizing the probability of staying in a set. Based on this,
we compute a PCIS within a set
with a prescribed probability level. This extends the results of~\cite{Pola(2006),Abate(2007),Amin(2014)}.

%It is worthy highlighting that the PCIS computation is sometimes challenging.

\subsection{Main Contributions and Organization}
 The objective of this paper is to  provide a novel tool to analyze invariance in stochastic control systems. The contributions are summarized as follows.

 As the first contribution, we  propose two novel definitions of PCIS: $N$-step $\epsilon$-PCIS and infinite-horizon $\epsilon$-PCIS (Definitions~\ref{finidefPCIS} and \ref{infinidefPCIS}). An $N$-step $\epsilon$-PCIS is a set within which the state can stay for $N$ steps with probability $\epsilon$ under some admissible controller while an infinite-horizon $\epsilon$-PCIS is a set within which the state can stay forever with probability $\epsilon$ under some admissible controller. These invariant sets are different from the ones proposed in \cite{Cannon(2009),Kofman(2012)}, which address probabilistic set invariance at each time
    step. Our definitions are applicable for general discrete-time stochastic control systems. We provide fundamental properties of PCISs and explore their relation to RCISs. Furthermore, we propose conditions for the existence of infinite-horizon $\epsilon$-PCIS (Theorem~\ref{infinitenecessary}).

The second contribution is that we design iterative algorithms to compute the largest finite- and infinite-horizon PCIS within a given set for systems with discrete and continuous spaces. The PCIS computation is based on the stochastic backward  reachable set. For discrete state and control spaces, it is shown that at each iteration, the  stochastic backward reachable set computation of an $N$-step $\epsilon$-PCIS can be reformulated as a linear program (LP) (Theorem~\ref{finitemaxPCIS1} and Corollary~\ref{finiteThe}) and an infinite-horizon $\epsilon$-PCIS as a computationally tractable mixed-integer linear program~(MILP) (Theorem~\ref{infinitemaximalPCIS}).
       Furthermore, we prove that these algorithms terminate in a finite number of steps. For continuous state and control spaces, we present a discretization procedure. Under weaker assumptions than~\cite{Chow(1991)}, we prove the convergence of such approximations for $N$-step $\epsilon$-PCISs (Theorem~\ref{finitediffed}). The approximations generalize the case in~\cite{Abate(2007)}, which only discretizes the state space for a given discrete control space. Furthermore, in order to compute an infinite-horizon $\epsilon$-PCIS, we propose an algorithm based on that an infinite-horizon PCIS always contains an RCIS.

%
%\begin{figure}
%\centering
%\subfigure[$N$-step $\epsilon$-PCIS]{
%\begin{minipage}[b]{0.42\textwidth}
%\includegraphics[width=1\textwidth]{PCIS.pdf}
%\end{minipage}}
%
%\subfigure[Infinite-horizon $\epsilon$-PCIS]{
%\begin{minipage}[b]{0.42\textwidth}
%\includegraphics[width=1\textwidth]{PCISinf.pdf}
%\end{minipage}}
%
%\subfigure[Probabilistic invariant set \cite{Kofman(2012)}]{
%\begin{minipage}[b]{0.42\textwidth}
%\includegraphics[width=1\textwidth]{PISinf.pdf}
%\end{minipage}}
%\caption{\footnotesize (a) $N$-step $\epsilon$-PCIS: $\forall x_0\in \mathbb{Q}, \exists \mu,  \text{Pr}\{\forall k\in \mathbb{N}_{[0,N]}, x_k\in\mathbb{Q} \}\geq \epsilon$; (b)infinite-horizon $\epsilon$-PCIS: $\forall x_0\in \mathbb{Q}, \exists \mu, \text{Pr}\{\forall k\in \mathbb{N}, x_k\in\mathbb{Q} \}\geq \epsilon$; (c) probabilistic invariant set: $\forall x_0\in \mathbb{Q}$, $\text{Pr}\{ x_k\in\mathbb{Q} \}\geq \epsilon, \forall k\in \mathbb{N}$.}
%\label{PCISPIS}
%\end{figure}

The remainder of the paper is organized as follows. Section~\ref{preliminaries} provides the system model and some preliminaries.  Section \ref{finitePCISs} presents the definition, properties, and computation algorithms of finite-horizon PCISs. Section \ref{infinitePCISs} extends the results to the infinite-horizon case.    Examples in Section~\ref{example} illustrate the effectiveness of our approach. Section~\ref{conclusion} concludes this paper.

%and discuss the relation to the existing work.
\textbf{Notation.} Let $\mathbb{N}$ denote the set of nonnegative integers and $\mathbb{R}$ the set of real numbers. For some $q,s \in \mathbb{N}$ and $q<s$, let $\mathbb{N}_{\geq q}$ and $\mathbb{N}_{[q,s]}$ denote the sets $\{r \in \mathbb{N}\mid r\geq q\}$ and $\{r \in \mathbb{N}\mid q \leq r\leq s\}$, respectively. For two sets $\mathbb{X}$ and $\mathbb{Y}$, $\mathbb{X}\setminus \mathbb{Y}=\{x\mid x\in\mathbb{X}, x\notin \mathbb{Y}\}$ and $\mathbb{X}\bigtriangleup \mathbb{Y}=(\mathbb{X}\setminus \mathbb{Y}) \cup (\mathbb{Y}\setminus \mathbb{X})$. When $\leq$, $\geq$, $<$, and $>$ are applied to vectors, they are interpreted element-wise. $\rm{Pr}$  denotes the probability. For a set $\mathbb{X}$, $\mathcal{B}(\mathbb{X})$ and $\mathcal{P}(\mathbb{X})$ denote the Boreal $\sigma$-algebra generated by $\mathbb{X}$ and the space of probability distributions on $\mathbb{X}$, respectively.
The indicator function of a set $\mathbb{X}$ is denoted by $\mathbbm{1}_{\mathbb{X}}(x)$, that is, if $x\in  \mathbb{X}$,
$\mathbbm{1}_{\mathbb{X}}(x)=1$  and otherwise, $\mathbbm{1}_{\mathbb{X}}(x)=0$.

\section{System Description and Preliminaries}\label{preliminaries}
Consider a stochastic control system described by a Markov controlled process  $\mathcal{S}=(\mathbb{X},\mathbb{U},T)$, where
\begin{itemize}
  \item $\mathbb{X}$ is a state space endowed with a Borel $\sigma$-algebra $\mathcal{B}(\mathbb{X})$;
  \item $\mathbb{U}$ is a compact control space endowed with a Borel $\sigma$-algebra $\mathcal{B}(\mathbb{U})$;
  \item $T:  \mathcal{B}(\mathbb{X})\times\mathbb{X}\times \mathbb{U}\rightarrow \mathbb{R}$ is a Borel-measurable stochastic kernel given $\mathbb{X}\times\mathbb{U}$, which assigns to each $x\in \mathbb{X}$ and $u\in \mathbb{U}$ a probability measure on the Borel space $(\mathbb{X},\mathcal{B}(\mathbb{X}))$: $T(\cdot| x,u)$.
\end{itemize}

%\begin{remark}
%  This definition includes a quite large class of stochastic control systems, such as any control system with additive disturbance (not necessarily independent of the state). Some examples are provided in Section \ref{example}.
%\end{remark}

Let us denote by $\mathbb{U}_x$ the set of the admissible control actions for each $x\in \mathbb{X}$. Assume that $\mathbb{U}_x$ is nonempty for each $x\in\mathbb{X}$.

Consider a finite horizon $N\in \mathbb{N}$. A policy is said to be a Markov policy if the control inputs are only dependent on the current state, i.e., $u_k=\mu_k(x_k)$.
\begin{definition}(Markov Policy)
A Markov policy $\bm{\mu}$ for system $\mathcal{S}$ is a sequence $\bm{\mu}=(\mu_0,\mu_1,\ldots, \mu_{N-1})$ of universally measurable maps
  \begin{eqnarray*}
  \mu_k:\mathbb{X}\rightarrow \mathbb{U}, \forall k\in \mathbb{N}_{[0,N-1]}.
  \end{eqnarray*}
\end{definition}

\begin{remark}
Given a space $\mathbb{Y}$, a subset $\mathbb{A}$ in this space  is universally measurable if it is measurable with respect to every complete probability measure on $\mathbb{Y}$ that measures all Borel sets in $\mathcal{B}(\mathbb{Y})$.
A function $\mu:\mathbb{Y}\rightarrow \mathbb{W}$ is universally measurable if $\mu^{-1}(\mathbb{A})$ is universally measurable in $\mathbb{Y}$ for every~$\mathbb{A}\in \mathcal{B}(\mathbb{W})$.
  As stated in~\cite{Abate(2007),Bertsekas(2004)}, the condition of universal measurability is weaker than the condition of Borel measurability for showing the existence of a solution to a stochastic optimal problem. Roughly speaking, this is because the projections of measurable sets are analytic sets and analytic sets are universally measurable but not always Borel measurable \cite{Bertsekas(2004),Stinchcombe(1992)}.
\end{remark}

\begin{remark}
For a large class of stochastic optimal control problems, Markov policies are sufficient to characterize the optimal policy~\cite{Bertsekas(2004)}.  Furthermore, since a randomized Markov policy does not increase the largest probability that the states remain in a set, we focus on deterministic Markov policies in the following.
\end{remark}

We denote the set of Markov policies as $\mathcal{M}$. Consider a set~$\mathbb{Q}\in \mathcal{B}(\mathbb{X})$. Given an initial state $x_0\in \mathbb{X}$ and a Markov policy~$\bm{\mu}\in \mathcal{M}$, an execution is a sequence of states $(x_0,x_1,\ldots, x_N)$.  Introduce the probability with which the state $x_k$ will remain within $\mathbb{Q}$ for all $k\in \mathbb{N}_{[0,N]}$:
\begin{eqnarray*}
p_{N,\mathbb{Q}}^{\bm{\mu}}(x_0)={\rm{Pr}}\{\forall k\in \mathbb{N}_{[0,N]}, x_k\in\mathbb{Q} \}.
\end{eqnarray*}

Let $p^{*}_{N,\mathbb{Q}}(x)=\sup_{\bm{\mu}\in \mathcal{M}}p_{N,\mathbb{Q}}^{\bm{\mu}}(x)$, $\forall x\in \mathbb{Q}$. We call  $p^{*}_{N,\mathbb{Q}}(x)$  the $N$-step invariance probability at $x$ in the set~$\mathbb{Q}$.  Following the dynamic program (DP) in~\cite{Abate(2007)}, define the value function $V^*_{k,\mathbb{Q}}: \mathbb{X}\rightarrow [0,1], k=0,1,\ldots,N$, by the backward recursion:
  \begin{eqnarray}\label{Vk}
  V_{k,\mathbb{Q}}^*(x)=\sup_{u\in \mathbb{U}}\mathbbm{1}_{\mathbb{Q}}(x)\int_{\mathbb{Q}}V^*_{k+1,\mathbb{Q}}(y)T(dy|x,u), x\in\mathbb{X},
  \end{eqnarray}
   with initialization
 $V_{N,\mathbb{Q}}^*(x)=1, x\in\mathbb{Q}$.

 \begin{assumption}\label{finiteassum}
   The set $$\mathbb{U}_k(x,\lambda)=\left\{u\in \mathbb{U}\mid \int_{\mathbb{X}}V^*_{k+1,\mathbb{Q}}(y)T(dy|x,u)\geq \lambda\right\}$$ is compact for all $x\in \mathbb{Q}$, $\lambda\in \mathbb{R}$, and $k\in \mathbb{N}_{[0,N-1]}$.
 \end{assumption}

\begin{lemma}\label{TheAbata1}\cite{Abate(2007)}
For all $x\in\mathbb{Q}$,
$p^{*}_{N,\mathbb{Q}}(x)=V_{0,\mathbb{Q}}^*(x)$.
   If Assumption~\ref{finiteassum} holds,
  the optimal Markov policy $\bm{\mu}_{\mathbb{Q}}^*=(\mu_{0,\mathbb{Q}}^*,\mu^*_{1,\mathbb{Q}},\ldots,\mu^*_{N-1,\mathbb{Q}})$ exists and is given by
  \begin{eqnarray*}
  \mu_{k,\mathbb{Q}}^*(x)=\arg\sup_{u\in\mathbb{U}}\mathbbm{1}_{\mathbb{Q}}(x)\int_{\mathbb{Q}}V^*_{k+1,\mathbb{Q}}(y)T(dy|x,u),
  \nonumber \\
   x\in\mathbb{Q}, k\in \mathbb{N}_{[0,N-1]}.
  \end{eqnarray*}
\end{lemma}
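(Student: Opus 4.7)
The plan is to prove both assertions by backward induction on the stage index $k$, running from $N$ down to $0$. For a fixed Markov policy $\bm{\mu}=(\mu_0,\ldots,\mu_{N-1})\in\mathcal{M}$, I first introduce the policy-indexed value function
\begin{eqnarray*}
V_{k,\mathbb{Q}}^{\bm{\mu}}(x)=\mathrm{Pr}\bigl\{\forall j\in\mathbb{N}_{[k,N]},\ x_j\in\mathbb{Q}\,\big|\,x_k=x\bigr\},
\end{eqnarray*}
with $V_{N,\mathbb{Q}}^{\bm{\mu}}(x)=\mathbbm{1}_{\mathbb{Q}}(x)$. By the Markov property of the kernel $T$ and the tower property of conditional expectation, one checks the one-step recursion
\begin{eqnarray*}
V_{k,\mathbb{Q}}^{\bm{\mu}}(x)=\mathbbm{1}_{\mathbb{Q}}(x)\int_{\mathbb{X}}V_{k+1,\mathbb{Q}}^{\bm{\mu}}(y)\,T\bigl(dy\,|\,x,\mu_k(x)\bigr),
\end{eqnarray*}
and since $V_{k+1,\mathbb{Q}}^{\bm{\mu}}$ vanishes outside $\mathbb{Q}$, the outer integral can be restricted to $\mathbb{Q}$. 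In particular, $p_{N,\mathbb{Q}}^{\bm{\mu}}(x_0)=V_{0,\mathbb{Q}}^{\bm{\mu}}(x_0)$.

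Next I would show that the pointwise supremum of $V_{k,\mathbb{Q}}^{\bm{\mu}}$ over $\bm{\mu}\in\mathcal{M}$ coincides with $V_{k,\mathbb{Q}}^*$ defined by (\ref{Vk})--(\ref{VN}). The induction step is the identity
\begin{eqnarray*}
\sup_{\bm{\mu}}V_{k,\mathbb{Q}}^{\bm{\mu}}(x)=\mathbbm{1}_{\mathbb{Q}}(x)\sup_{u\in\mathbb{U}}\int_{\mathbb{Q}}\Bigl(\sup_{\bm{\mu}'}V_{k+1,\mathbb{Q}}^{\bm{\mu}'}(y)\Bigr)T(dy\,|\,x,u).
\end{eqnarray*}
The $\leq$ direction is immediate by splitting $\bm{\mu}=(\mu_k,\bm{\mu}')$ and monotonicity of the integral. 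The $\geq$ direction is the nontrivial part: given $u$ and an approximately optimal tail policy $\bm{\mu}'(y)$ depending measurably on the state $y$, one must paste them into a single Markov policy attaining the outer supremum up to $\varepsilon$. This is where a measurable selection argument in the spirit of \cite{Bertsekas(2004)} is needed, which is the reason the definition of Markov policies uses universally measurable maps rather than Borel measurable ones: the analytic measurability of the $\sup$-integral and the Jankov--von Neumann selection theorem yield a universally measurable $\varepsilon$-optimal selector, and letting $\varepsilon\downarrow 0$ closes the induction. Specialising to $k=0$ gives $p^*_{N,\mathbb{Q}}(x)=V^*_{0,\mathbb{Q}}(x)$.

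For the second assertion, I would invoke Assumption~\ref{finiteassum} to upgrade the $\varepsilon$-optimal selector to an exact optimizer. Compactness of the super-level sets $\mathbb{U}_k(x,\lambda)$ implies that the map $u\mapsto\int_{\mathbb{Q}}V^*_{k+1,\mathbb{Q}}(y)T(dy|x,u)$ attains its supremum over $\mathbb{U}$ for every $x\in\mathbb{Q}$ (the supremum is the largest $\lambda$ for which $\mathbb{U}_k(x,\lambda)$ is non-empty, and nestedness of compact non-empty sets gives a common point). A measurable selection theorem then produces a universally measurable $\mu^*_{k,\mathbb{Q}}$ attaining the argmax, and collecting these stagewise selectors yields the required $\bm{\mu}^*_{\mathbb{Q}}$. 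The main obstacle throughout is measurability, not optimality: once one is willing to work with universally measurable policies and invoke the selection theorems referenced in \cite{Bertsekas(2004)}, both claims follow by a routine backward induction.
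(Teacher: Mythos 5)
The paper does not prove this lemma at all --- it is imported verbatim from the cited reference \cite{Abate(2007)} (which in turn follows the Bertsekas--Shreve framework), so there is no in-paper argument to compare against. Your sketch is a faithful reconstruction of that standard argument: the backward induction on the policy-indexed value functions, the sup-integral interchange handled by universally measurable $\varepsilon$-optimal selectors, and the upgrade to an exact optimal selector via the nested-compactness of the super-level sets $\mathbb{U}_k(x,\lambda)$ under Assumption~\ref{finiteassum} are exactly the ingredients of the proof in the cited source, and I see no gap in the outline.
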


%
%Here, $V_k^*(x)$ denotes the maximum probabilitywith which starting from  an initial state $x\in\mathbb{Q}$, the states remain in $\mathbb{Q}$ over the time
%   interval~$\mathbb{N}_{[k,N]}$.

Extending the finite horizon to infinite horizon, we need to introduce stationary Markov policies.
\begin{definition}
  (Stationary Markov Policy) A Markov policy $\bm{\mu}\in \mathcal{M}$ is said to be stationary if $\bm{\mu}=(\bar{\mu},\bar{\mu},\ldots)$ with $\bar{\mu}: \mathbb{X}\rightarrow \mathbb{U}$ universally measurable.
\end{definition}

 %Consider a Borel set $\mathbb{Q}\in \mathcal{B}(\mathbb{X})$ and its complement $\bar{\mathbb{Q}}$ in $\mathbb{X}$.
 Given an initial state $x_0\in \mathbb{X}$ and a stationary Markov policy $\bm{\mu}\in \mathcal{M}$, an execution is denoted by a sequence of states $(x_0,x_1,\ldots)$.  We introduce the probability with which the state $x_k$ will remain within $\mathbb{Q}$ for all $k\in \mathbb{N}_{\geq 0}$:
\begin{eqnarray*}
p_{\infty,\mathbb{Q}}^{\bm{\mu}}(x_0)={\rm{Pr}}\{\forall k\in \mathbb{N}, x_k\in\mathbb{Q}\}.
\end{eqnarray*}

Denote $p^{*}_{\infty,\mathbb{Q}}(x_0)=\sup_{\bm{\mu}\in \mathcal{M}}p_{\infty,\mathbb{Q}}^{\bm{\mu}}(x_0)$. We call  $p^{*}_{\infty,\mathbb{Q}}(x)$ the infinite-horizon invariance probability at $x$ in the set~$\mathbb{Q}$. Define the
  value function $G^*_{k,\mathbb{Q}}: \mathbb{X}\rightarrow [0,1], k\in \mathbb{N}_{\geq 0}$, through the forward recursion:
  \begin{eqnarray}\label{Ginf1}
   G^*_{k+1,\mathbb{Q}}(x)=\sup_{u\in \mathbb{U}}\mathbbm{1}_{\mathbb{Q}}(x)\int_{\mathbb{Q}}G^*_{k,\mathbb{Q}}(y)T(dy|x,u), x\in\mathbb{X},
  \end{eqnarray}
  initialized with
 $G^*_{0,\mathbb{Q}}(x)=1, x\in\mathbb{Q}$.

  \begin{assumption}\label{infiniteassum}
    There exists a $\bar{k}\geq 0$ such that the set $$\mathbb{U}_k(x,\lambda)=\left\{u\in \mathbb{U}\mid \int_{\mathbb{X}}G^*_{k,\mathbb{Q}}(y)T(dy|x,u)\geq \lambda\right\}$$ is compact for all $x\in \mathbb{Q}$, $\lambda\in \mathbb{R}$, and $k\in \mathbb{N}_{\geq \bar{k}}$.
  \end{assumption}
\begin{lemma}\label{TheAbata2}\cite{Abate(2007)}
   Suppose that Assumption~\ref{infiniteassum} holds.
    Then, for all  $x\in \mathbb{Q}$, the limit $G^*_{\infty,\mathbb{Q}}(x)$ exists and satisfies
  \begin{eqnarray}\label{Ginf}
  G^*_{\infty,\mathbb{Q}}(x)=\sup_{u\in \mathbb{U}}\mathbbm{1}_{\mathbb{Q}}(x)\int_{\mathbb{Q}}G^*_{\infty,\mathbb{Q}}(y)T(dy|x,u)),
  \end{eqnarray}
and  $p^{*}_{\infty,\mathbb{Q}}(x)=G^*_{\infty,\mathbb{Q}}(x)$.
   Furthermore, an optimal stationary Markov policy $\bm{\mu}_{\mathbb{Q}}^*=(\bar{\mu}_{\mathbb{Q}}^*,\bar{\mu}_{\mathbb{Q}}^*,\ldots)$ exists and is given by
    \begin{eqnarray*}
  \bar{\mu}_{\mathbb{Q}}^*(x)=\arg\sup_{u\in\mathbb{U}}\mathbbm{1}_{\mathbb{Q}}(x)
  \int_{\mathbb{Q}}G^*_{\infty,\mathbb{Q}}(y)T(dy|x,u), x\in\mathbb{Q}.
  \end{eqnarray*}
\end{lemma}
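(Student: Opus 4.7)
The plan is to prove the four assertions of Lemma~\ref{TheAbata2} in sequence: monotone convergence of the forward recursion (\ref{Ginf1}), the Bellman-type fixed-point equation (\ref{Ginf}), universally measurable selection of a stationary policy, and the two-sided identification $G^*_{\infty,\mathbb{Q}}=p^*_{\infty,\mathbb{Q}}$.

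First I would show that $\{G^*_{k,\mathbb{Q}}(x)\}_{k\geq 0}$ is non-increasing in $k$ for every $x\in\mathbb{X}$, by induction. At $k=0$, $G^*_{1,\mathbb{Q}}(x)=\sup_u\mathbbm{1}_{\mathbb{Q}}(x)T(\mathbb{Q}\mid x,u)\leq 1=G^*_{0,\mathbb{Q}}(x)$ on $\mathbb{Q}$, and the inductive step follows because the operator $f\mapsto\sup_u\mathbbm{1}_{\mathbb{Q}}(x)\int_{\mathbb{Q}}f(y)T(dy|x,u)$ is monotone in $f$. Since the sequence is bounded in $[0,1]$ and monotone, the pointwise limit $G^*_{\infty,\mathbb{Q}}(x)=\lim_{k\to\infty}G^*_{k,\mathbb{Q}}(x)$ exists.

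Passing this limit through (\ref{Ginf1}) is the main obstacle. For each fixed $u\in\mathbb{U}$, the dominated convergence theorem (with dominator $1$) yields $\int_{\mathbb{Q}}G^*_{k,\mathbb{Q}}(y)T(dy|x,u)\downarrow\int_{\mathbb{Q}}G^*_{\infty,\mathbb{Q}}(y)T(dy|x,u)$, but interchanging $\lim_k$ with $\sup_u$ is not automatic: for pointwise decreasing families one only gets $\sup_u\lim_k\leq\lim_k\sup_u$ in general. Assumption~\ref{infiniteassum} does exactly this work. Compactness of the super-level sets $\mathbb{U}_k(x,\lambda)$ makes each $u\mapsto\int_{\mathbb{Q}}G^*_{k,\mathbb{Q}}(y)T(dy|x,u)$ upper semi-continuous on the compact space $\mathbb{U}$, so for $k\geq\bar k$ a maximizer $u_k^*$ exists; passing to a convergent subsequence $u_{k_j}^*\to u^*$ and combining USC with the monotone decrease of $G^*_{k,\mathbb{Q}}$ in $k$ gives $\lim_k\sup_u\int_{\mathbb{Q}}G^*_{k,\mathbb{Q}}T\leq\int_{\mathbb{Q}}G^*_{\infty,\mathbb{Q}}(y)T(dy|x,u^*)\leq\sup_u\int_{\mathbb{Q}}G^*_{\infty,\mathbb{Q}}T$. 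The reverse inequality is immediate from $G^*_{k,\mathbb{Q}}\geq G^*_{\infty,\mathbb{Q}}$, yielding (\ref{Ginf}).

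Given (\ref{Ginf}), I would apply a universally measurable selection theorem \cite{Bertsekas(2004)} to the compact-valued correspondence $x\mapsto\arg\sup_u\int_{\mathbb{Q}}G^*_{\infty,\mathbb{Q}}(y)T(dy|x,u)$, producing a universally measurable $\bar{\mu}^*_{\mathbb{Q}}:\mathbb{Q}\to\mathbb{U}$ with $G^*_{\infty,\mathbb{Q}}(x)=\mathbbm{1}_{\mathbb{Q}}(x)\int_{\mathbb{Q}}G^*_{\infty,\mathbb{Q}}(y)T(dy|x,\bar{\mu}^*_{\mathbb{Q}}(x))$. Iterating this identity along the stationary policy $\bm{\mu}^*_{\mathbb{Q}}=(\bar{\mu}^*_{\mathbb{Q}},\bar{\mu}^*_{\mathbb{Q}},\ldots)$ and using $G^*_{\infty,\mathbb{Q}}\leq 1$ gives $G^*_{\infty,\mathbb{Q}}(x)\leq p^{\bm{\mu}^*_{\mathbb{Q}}}_{k,\mathbb{Q}}(x)$ for every $k$, so $G^*_{\infty,\mathbb{Q}}(x)\leq p^{\bm{\mu}^*_{\mathbb{Q}}}_{\infty,\mathbb{Q}}(x)\leq p^*_{\infty,\mathbb{Q}}(x)$. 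For the reverse inequality, comparing the forward and backward recursions under the stationary kernel gives $G^*_{k,\mathbb{Q}}=p^*_{k,\mathbb{Q}}$ via Lemma~\ref{TheAbata1}, so $p^{\bm{\mu}}_{k,\mathbb{Q}}\leq G^*_{k,\mathbb{Q}}$ for any $\bm{\mu}$ and finite $k$; continuity of probability from above on the decreasing events $\{x_0,\ldots,x_k\in\mathbb{Q}\}$ lets us send $k\to\infty$ to obtain $p^{\bm{\mu}}_{\infty,\mathbb{Q}}\leq G^*_{\infty,\mathbb{Q}}$, and taking the supremum over $\bm{\mu}$ closes the chain with equality throughout and certifies optimality of $\bm{\mu}^*_{\mathbb{Q}}$.
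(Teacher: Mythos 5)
The paper does not prove this lemma: it is quoted verbatim from \cite{Abate(2007)} (Lemma~\ref{TheAbata2} carries the citation and no proof appears in the text or appendices), so there is no in-paper argument to compare against. Your reconstruction follows the same standard route as the cited source and is sound. In particular you correctly isolate the two genuinely delicate points: (i) the interchange of $\lim_k$ and $\sup_u$, which is exactly where Assumption~\ref{infiniteassum} enters --- compact super-level sets make each $u\mapsto\int_{\mathbb{Q}}G^*_{k,\mathbb{Q}}(y)T(dy|x,u)$ upper semicontinuous, and combining USC with the monotone decrease in $k$ (evaluate $f_m$ at the limit of maximizers $u^*_{k_j}$ for each fixed $m$, then let $m\to\infty$) closes the gap in the direction that is not automatic; and (ii) the two-sided identification of $G^*_{\infty,\mathbb{Q}}$ with $p^*_{\infty,\mathbb{Q}}$, where iterating the fixed-point identity along the selected stationary policy gives one inequality and the reindexing $G^*_{k,\mathbb{Q}}=V^*_{0,\mathbb{Q}}$ (horizon $k$) together with continuity from above on the decreasing events gives the other. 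The measurable-selection step is invoked rather than proved, but that is the appropriate level of detail here and matches how \cite{Bertsekas(2004)} is used elsewhere in the paper.
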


In the following two sections, we explore finite- and infinite-horizon PCISs and how to compute them.
%Here, $G_k^*(x)$ denotes the maximum probability with which starting from an initial state $x\in\mathbb{Q}$, the states remain in $\mathbb{Q}$ over the time interval $\mathbb{N}_{[0,k]}$.
%
%
%   It is easy to observe that $G^*_{k}(x)\geq G^*_{k+1}(x)\geq 0$, $\forall x\in \mathbb{X}, k\in\mathbb{N}_{\geq 0}$.  By the monotone convergence theorem,

\section{Finite-Horizon $\epsilon$-PCIS}\label{finitePCISs}
In this section, we first define finite-horizon $\epsilon$-PCIS for the system $\mathcal{S}$ and provide the properties of this set. Then, we explore how to  compute the finite-horizon $\epsilon$-PCIS within a given set.

%\begin{remark}
%Define the probability that the state leaves $\mathbb{Q}$ during the time interval $\mathbb{N}_{[0,N]}$:
%\begin{eqnarray}
%P_{x_0,N}^{\mu}(\bar{\mathbb{Q}})=Pr\{\exists k\in \mathbb{N}_{[0,N]}, x_k\in\bar{\mathbb{Q}} \}.
%\end{eqnarray}
%Then, we have
%\begin{itemize}
%  \item[(i)] $P_{x_0,N}^{\mu}(\bar{\mathbb{Q}})=1-p_{x_0,N}^{\mu}(\mathbb{Q})$,
%  \item[(ii)]  $P_{x_0,N}^{\mu}(\bar{\mathbb{Q}})=E^{\mu}_{x_0}\Big[\max_{k\in\mathbb{N}_{[0,N]}}
%\mathbbm{1}_{\bar{\mathbb{Q}}}(x_k)\Big]$.
%\end{itemize}
%\end{remark}

\begin{definition}\label{finidefPCIS}
  ($N$-step $\epsilon$-PCIS) Consider a stochastic control system $\mathcal{S}=(\mathbb{X},\mathbb{U},T)$. Given a confidence level $0\leq\epsilon\leq1$, a set $\mathbb{Q}\in \mathcal{B}(\mathbb{X})$ is an $N$-step $\epsilon$-PCIS for $\mathcal{S}$ if  for any $x\in \mathbb{Q}$, there exists at least one Markov policy~$\bm{\mu}\in\mathcal{M}$ such that~$p_{N,\mathbb{Q}}^{\bm{\mu}}(x)\geq \epsilon$. %$p_{x_0,N}^{\mu}(\mathbb{Q})\geq 1-\epsilon$ or
\end{definition}

We define the stochastic backward reachable set $\mathbb{S}^{*}_{\epsilon,N}(\mathbb{Q})$ by collecting all the states $x\in \mathbb{Q}$ at which the $N$-step invariance  probability $p^*_{N,\mathbb{Q}}(x)\geq \epsilon$, i.e.,
\begin{eqnarray*}\label{finiteSBRS}
 &&\hspace{0cm}\mathbb{S}^{*}_{\epsilon,N}(\mathbb{Q})=\{x\in\mathbb{Q}\mid \exists \bm{\mu}\in \mathcal{M}, p_{N,\mathbb{Q}}^{\bm{\mu}}(x)\geq \epsilon\}\nonumber\\
&&\hspace{1.35cm} =\{x\in\mathbb{Q}\mid \sup_{\bm{\mu}\in \mathcal{M}}p_{N,\mathbb{Q}}^{\bm{\mu}}(x)\geq \epsilon\}\nonumber\\
&&\hspace{1.33cm} =\{x\in\mathbb{Q}\mid V^*_{0,\mathbb{Q}}(x)\geq \epsilon\}.
 \end{eqnarray*}

If $\mathbb{S}^{*}_{\epsilon,N}(\mathbb{Q})=\mathbb{Q}$, it yields from $\mathbb{Q}\in \mathcal{B}(\mathbb{X})$ that $\mathbb{S}^{*}_{\epsilon,N}(\mathbb{Q})$ is also Borel-measurable.
If  $\mathbb{S}^{*}_{\epsilon,N}(\mathbb{Q})\subset\mathbb{Q}$, the following lemma addresses the measurability of the set $\mathbb{S}^{*}_{\epsilon,N}(\mathbb{Q})$.
\begin{lemma}\label{finiteuniver}
For any  $\mathbb{Q}\in \mathcal{B}(\mathbb{X})$, the set $\mathbb{S}^{*}_{\epsilon,N}(\mathbb{Q})\subseteq\mathbb{Q}$ is universally measurable.
\end{lemma}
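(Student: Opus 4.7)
The plan is to prove universal measurability of $\mathbb{S}^{*}_{\epsilon,N}(\mathbb{Q})$ by first establishing that the value function $V^*_{0,\mathbb{Q}}$ is universally measurable, and then observing that $\mathbb{S}^{*}_{\epsilon,N}(\mathbb{Q})$ is a super-level set of this function intersected with the Borel set $\mathbb{Q}$.

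First, I would proceed by backward induction on $k$ along the dynamic programming recursion \eqref{Vk}-\eqref{VN}, showing that $V^*_{k,\mathbb{Q}}:\mathbb{X}\to[0,1]$ is universally measurable for every $k\in \mathbb{N}_{[0,N]}$. The base case $V^*_{N,\mathbb{Q}}(x)=\mathbbm{1}_{\mathbb{Q}}(x)$ is Borel measurable, since $\mathbb{Q}\in \mathcal{B}(\mathbb{X})$, and hence universally measurable. For the inductive step, suppose $V^*_{k+1,\mathbb{Q}}$ is universally measurable. Because $T(\cdot\mid x,u)$ is a Borel-measurable stochastic kernel on $\mathbb{X}$ given $\mathbb{X}\times \mathbb{U}$, the integral map
\begin{equation*}
(x,u)\;\mapsto\; \int_{\mathbb{Q}} V^*_{k+1,\mathbb{Q}}(y)\, T(dy\mid x,u)
\end{equation*}
is universally measurable on $\mathbb{X}\times \mathbb{U}$; this is the standard preservation result for integrals of universally measurable (indeed, upper semi-analytic) functions against Borel-measurable stochastic kernels, as developed in the analytic/universal measurability framework of Bertsekas and Shreve \cite{Bertsekas(2004)} and used in \cite{Abate(2007)}.

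Next, I would take the supremum over $u\in \mathbb{U}$. Since $\mathbb{U}$ is a Borel space and the integrand above is universally measurable (upper semi-analytic) in $(x,u)$, the function
\begin{equation*}
x\;\mapsto\; \sup_{u\in \mathbb{U}} \int_{\mathbb{Q}} V^*_{k+1,\mathbb{Q}}(y)\, T(dy\mid x,u)
\end{equation*}
is again universally measurable; this is the projection/selection property of upper semi-analytic functions (Proposition 7.47 in \cite{Bertsekas(2004)}). Finally, multiplication by the Borel-measurable indicator $\mathbbm{1}_{\mathbb{Q}}$ preserves universal measurability, so $V^*_{k,\mathbb{Q}}$ is universally measurable, closing the induction. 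In particular, $V^*_{0,\mathbb{Q}}$ is universally measurable.

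The conclusion is then immediate: by Lemma \ref{TheAbata1}, $\mathbb{S}^{*}_{\epsilon,N}(\mathbb{Q}) = \{x\in\mathbb{Q}\mid V^*_{0,\mathbb{Q}}(x)\geq \epsilon\}$, which is the intersection of the Borel set $\mathbb{Q}$ with the super-level set $\{x\in\mathbb{X}\mid V^*_{0,\mathbb{Q}}(x)\geq\epsilon\}$. Super-level sets of universally measurable functions are universally measurable, and intersections of two universally measurable sets are universally measurable, so the result follows. The main obstacle in this argument is not the measure-theoretic bookkeeping at the end but the inductive step: one must invoke the correct preservation results for universally measurable / upper semi-analytic functions under integration against a stochastic kernel and under supremum over a Borel action space, since the supremum of a Borel-measurable family need not be Borel measurable. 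I would therefore be careful to cite the appropriate results from \cite{Bertsekas(2004)} rather than attempting a self-contained measure-theoretic argument.
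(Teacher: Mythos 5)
Your proof is correct and follows essentially the same route as the paper: both reduce the claim to the (upper) semi-analyticity of the value function $V^*_{0,\mathbb{Q}}$ produced by the dynamic programming recursion, deduce universal measurability from it via the Bertsekas--Shreve framework, and conclude by writing $\mathbb{S}^{*}_{\epsilon,N}(\mathbb{Q})$ as a super-level set of that function inside the Borel set $\mathbb{Q}$. The only difference is that the paper simply cites \cite{Abate(2007)} for the lower-semianalyticity of $-V^*_{0,\mathbb{Q}}$ while you sketch the backward induction yourself; if you keep that induction, carry ``upper semi-analytic'' (not merely ``universally measurable'') as the induction invariant, since the supremum over $u\in\mathbb{U}$ preserves the former but not the latter, as you yourself note at the end.
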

\begin{proof}
  See Appendix A.
\end{proof}

Let us denote by $\mathcal{P}(\mathbb{X})$ the set of all probability measures on $\mathbb{X}$. The following proposition shows that despite of the universal measurability of $\mathbb{S}^{*}_{\epsilon,N}(\mathbb{Q})$,   for any probability measure on $\mathbb{X}$, one can find another Borel-measurable set $\tilde{\mathbb{S}}^{*}_{\epsilon,N}(\mathbb{Q}))$ for which the difference to $\mathbb{S}^{*}_{\epsilon,N}(\mathbb{Q})$ is measure-zero.
\begin{proposition}\label{borelSBRS}
For any  $\mathbb{Q}\in \mathcal{B}(\mathbb{X})$ and   any $p\in \mathcal{P}(\mathbb{X})$, there exists a set $\tilde{\mathbb{S}}^{*}_{\epsilon,N}(\mathbb{Q})\in \mathcal{B}(\mathbb{X})$ with $\tilde{\mathbb{S}}^{*}_{\epsilon,N}(\mathbb{Q})\subseteq\mathbb{Q}$ such that $p(\tilde{\mathbb{S}}^{*}_{\epsilon,N}(\mathbb{Q})\bigtriangleup \mathbb{S}^{*}_{\epsilon,N}(\mathbb{Q}))=0$.
\end{proposition}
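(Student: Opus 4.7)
The approach is to combine the analytic structure of $\mathbb{S}^{*}_{\epsilon,N}(\mathbb{Q})$ inherited from the DP recursion with a Borel approximation that is uniform over probability measures.

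First I would recall, as used in proving Lemma~\ref{finiteuniver}, that $V^*_{0,\mathbb{Q}}$ is upper semianalytic: the terminal function $\mathbbm{1}_{\mathbb{Q}}$ is Borel, and each backward step in \eqref{Vk}---integration of the (upper semianalytic) value function against the Borel stochastic kernel $T$ followed by a supremum over the compact control space $\mathbb{U}$---preserves upper semianalyticity (this is standard in the Bertsekas--Shreve framework). Consequently, the super-level set $\{V^*_{0,\mathbb{Q}}\geq\epsilon\}$ is analytic in the Polish space $\mathbb{X}$, and its intersection with the Borel set $\mathbb{Q}$ is the analytic set $\mathbb{S}^{*}_{\epsilon,N}(\mathbb{Q})$.

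Next I would invoke a descriptive-set-theoretic fact: any analytic set $A$ in a Polish space admits Borel sets $B_{-}\subseteq A\subseteq B_{+}$ whose gap $B_{+}\setminus B_{-}$ is universally null, i.e.\ has $p$-outer measure zero for every Borel probability measure $p$. Setting $\tilde{\mathbb{S}}^{*}_{\epsilon,N}(\mathbb{Q}):=B_{-}$, I get a Borel subset of $\mathbb{Q}$ with
\begin{equation*}
\tilde{\mathbb{S}}^{*}_{\epsilon,N}(\mathbb{Q})\bigtriangleup \mathbb{S}^{*}_{\epsilon,N}(\mathbb{Q})\subseteq B_{+}\setminus B_{-},
\end{equation*}
so that $p(\tilde{\mathbb{S}}^{*}_{\epsilon,N}(\mathbb{Q})\bigtriangleup \mathbb{S}^{*}_{\epsilon,N}(\mathbb{Q}))=0$ for every $p\in\mathcal{P}(\mathbb{X})$, which is exactly the claim.

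The main obstacle is obtaining a Borel sandwich that is uniform in $p$. The usual Choquet capacitability argument produces inner/outer Borel envelopes only relative to a fixed probability measure, with the approximating sets varying with $p$. To circumvent this I would construct $B_{-}$ and $B_{+}$ directly from a Souslin-scheme (tree) representation of the analytic set $\mathbb{S}^{*}_{\epsilon,N}(\mathbb{Q})$, which is measure-free, and then argue that the residual analytic gap is contained in a Borel set that is null under every Borel probability on $\mathbb{X}$. Once this uniform sandwich is in hand, the remaining verifications---Borel measurability and the inclusion $\tilde{\mathbb{S}}^{*}_{\epsilon,N}(\mathbb{Q})\subseteq\mathbb{Q}$---are immediate from the construction.
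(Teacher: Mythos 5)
Your first step---establishing that $V^*_{0,\mathbb{Q}}$ is upper semianalytic, so that $\mathbb{S}^{*}_{\epsilon,N}(\mathbb{Q})$ is an analytic subset of $\mathbb{Q}$---is sound and is essentially the content of Lemma~\ref{finiteuniver} and its proof in Appendix~A, which is also the paper's starting point. The paper then finishes in one line by citing Lemma~7.26 of Bertsekas and Shreve: for each \emph{fixed} $p\in\mathcal{P}(\mathbb{X})$, a universally measurable set agrees with some Borel set up to a $p$-null set. That lemma is inherently per-measure; the Borel approximant is allowed to depend on $p$.

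The second step of your plan contains a genuine gap, and the step would in fact fail. The ``descriptive-set-theoretic fact'' you invoke---that every analytic set $A$ admits Borel sets $B_{-}\subseteq A\subseteq B_{+}$ with $B_{+}\setminus B_{-}$ null for \emph{every} Borel probability measure---is false. The gap $B_{+}\setminus B_{-}$ is itself Borel, and a Borel set that is null under every $p\in\mathcal{P}(\mathbb{X})$ must be empty: for any point $x$ it contains, the Dirac measure $\delta_x$ assigns it measure one. Your sandwich would therefore force $B_{-}=B_{+}=A$, i.e., it exists only when $A$ is already Borel, and no Souslin-scheme construction can circumvent this. You correctly identified uniformity in $p$ as the main obstacle, but the obstacle is insurmountable as posed: the same Dirac-measure argument shows that any single Borel set $\tilde{\mathbb{S}}^{*}_{\epsilon,N}(\mathbb{Q})$ satisfying $p(\tilde{\mathbb{S}}^{*}_{\epsilon,N}(\mathbb{Q})\bigtriangleup \mathbb{S}^{*}_{\epsilon,N}(\mathbb{Q}))=0$ for all $p$ would have to coincide with $\mathbb{S}^{*}_{\epsilon,N}(\mathbb{Q})$. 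The statement must accordingly be read with the quantifiers in the order that the cited Lemma~7.26 actually supports---for each $p$ there is a Borel set (depending on $p$) whose symmetric difference with $\mathbb{S}^{*}_{\epsilon,N}(\mathbb{Q})$ is $p$-null---and that version follows immediately from the universal measurability already established in Lemma~\ref{finiteuniver}, with no capacitability or Souslin-scheme machinery required.
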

\begin{proof}
  It follows from the universal measurability of $\mathbb{S}^{*}_{\epsilon,N}(\mathbb{Q})$  as shown in Lemma~\ref{finiteuniver}, the Borel measurability of $\mathbb{Q}$, $\mathbb{S}^{*}_{\epsilon,N}(\mathbb{Q})\subseteq \mathbb{Q}$, and Lemma 7.26 in \cite{Bertsekas(2004)}.
\end{proof}

From Lemma~\ref{TheAbata1} and the definition of $\mathbb{S}^{*}_{\epsilon,N}(\mathbb{Q})$, we can verify whether a set $\mathbb{Q}\in \mathcal{B}(\mathbb{X})$  is an $N$-step $\epsilon$-PCIS or not by checking if either $\mathbb{S}^{*}_{\epsilon,N}(\mathbb{Q})=\mathbb{Q}$, or  $V_{0,\mathbb{Q}}^*(x)\geq \epsilon$,  $\forall x\in\mathbb{Q}$, where $V_{0,\mathbb{Q}}^*(x)$ is defined in (\ref{Vk}).

\begin{remark}\label{maximalsafeset}
  The stochastic backward reachable set $\mathbb{S}^{*}_{\epsilon,N}(\mathbb{Q})$ is called the maximal probabilistic safe set in \cite{Abate(2007)}. The $N$-step $\epsilon$-PCIS $\mathbb{Q}$ in Definition \ref{finidefPCIS} refines the maximal probabilistic safe set by requiring that for any initial state $x_0\in\mathbb{Q}$, the $N$-step invariance probability $p_{\infty,\mathbb{Q}}^{*}(x_0)$ is no less than $\epsilon$.
\end{remark}

 In the following, we show that finite-horizon PCISs  are closed under union.

\begin{proposition}\label{finiteproperty}
   Consider a collection of sets $\mathbb{Q}_i\in \mathcal{B}(\mathbb{X})$, $i=1, \ldots, r$. If each $\mathbb{Q}_i$ is an $N_i$-step $\epsilon_i$-PCIS for the same system $\mathbb{S}$, then the union  $\bigcup_{i=1}^r \mathbb{Q}_i$ is an $N$-step $\epsilon$-PCIS, where $N=\min_{i}N_i$ and $\epsilon=\min_{i}\epsilon_i$.
\end{proposition}
\begin{proof}
The result follows from the following two facts:\\
(i) for any $\mathbb{Q},\mathbb{P}\in \mathcal{B}(\mathbb{X})$ with $\mathbb{Q}\subseteq \mathbb{P}$,   $\sup_{\bm{\mu}\in \mathcal{M}}p_{N,\mathbb{Q}}^{\bm{\mu}}(x)\leq \sup_{\bm{\mu}\in \mathcal{M}}p_{N,\mathbb{P}}^{\bm{\mu}}(x)$,  $\forall N\in \mathbb{N}$ and $\forall x\in \mathbb{Q}$; \\
(ii) for any  $N, N'\in \mathbb{N}$  with  $N\leq N'$, $\sup_{\bm{\mu}\in \mathcal{M}}p_{N',\mathbb{Q}}^{\bm{\mu}}(x)\leq \sup_{\bm{\mu}\in \mathcal{M}}p_{N,\mathbb{Q}}^{\bm{\mu}}(x)$,  $\forall Q\in \mathcal{B}(\mathbb{X})$ and $\forall x\in \mathbb{Q}$.
\end{proof}

\vspace{-0.1cm}
\subsection{Finite-horizon $\epsilon$-PCIS computation}
This subsection will address the following problem.
\begin{problem}\label{finiteproblem}
Given a set $\mathbb{Q}\in \mathcal{B}(\mathbb{X})$ and a prescribed probability $0\leq \epsilon \leq 1$, compute an $N$-step $\epsilon$-PCIS $\tilde{\mathbb{Q}}\subseteq\mathbb{Q}$.
\end{problem}

To handle this problem, our basic idea is to iteratively compute stochastic backward reachable sets until convergence.
A general procedure is presented in the following algorithm.
 \begin{algorithm}[H]\label{algorithm1}
\caption{$N$-step $\epsilon$-PCIS}
\begin{algorithmic}[1]
\State Initialize $i=0$ and $\mathbb{P}_i=\mathbb{Q}$.
\State Compute $V^{*}_{0,\mathbb{P}_i}(x), \forall x\in \mathbb{P}_i$.
\State Compute $\mathbb{S}^{*}_{\epsilon,N}(\mathbb{P}_i)$ and construct a Borel-measurable set $\tilde{\mathbb{S}}^{*}_{\epsilon,N}(\mathbb{P}_i)$ such that $p(\tilde{\mathbb{S}}^{*}_{\epsilon,N}(\mathbb{P}_i)\bigtriangleup \mathbb{S}^{*}_{\epsilon,N}(\mathbb{P}_i))=0$ for some $p\in \mathcal{P}(\mathbb{X})$;
\State If $\mathbb{P}_{i+1}=\mathbb{P}_i$, stop. Else, set $i=i+1$ and go to step 2.
\end{algorithmic}
\end{algorithm}

In Algorithm~1, we first compute the stochastic  backward reachable set $\mathbb{S}^{*}_{\epsilon,N}(\mathbb{P}_i)$ within $\mathbb{P}_i$ and then update $\mathbb{P}_{i+1}$ to be the corresponding Borel-measurable set $\tilde{\mathbb{S}}^{*}_{\epsilon,N}(\mathbb{P}_i)$, which is tailored by picking up a $p\in \mathcal{P}(\mathbb{S})$ such that  $p(\tilde{\mathbb{S}}^{*}_{\epsilon,N}(\mathbb{P}_i)\bigtriangleup \mathbb{S}^{*}_{\epsilon,N}(\mathbb{P}_i))=0$ (see Proposition~\ref{borelSBRS}). The following theorem shows convergence of $\mathbb{P}_i$. The terminal condition guarantees that the resulting set by this algorithm is an $N$-step $\epsilon$-PCIS $\tilde{\mathbb{Q}}\subseteq\mathbb{Q}$.

%We repeat these steps until the
%termination condition $\mathbb{P}_{i+1}=\mathbb{P}_i$ holds.
%According to ,

\begin{theorem}\label{finitemaxPCIS1}
Let Assumption~\ref{finiteassum} hold. For any $\mathbb{Q}\in\mathcal{B}(\mathbb{X})$, Algorithm $1$ converges, i.e., $\lim_{i\rightarrow\infty}\mathbb{P}_i$ exists.  If $\lim_{i\rightarrow\infty}\mathbb{P}_i\neq \emptyset$, it is the largest $N$-step $\epsilon$-PCIS within $\mathbb{Q}$.
\end{theorem}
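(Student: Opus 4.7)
The plan is to establish three things: (i) the sequence $\{\mathbb{P}_i\}$ produced by Algorithm~1 is monotonically nested, so that its set-theoretic limit $\mathbb{P}_\infty=\bigcap_{i\geq 0}\mathbb{P}_i$ exists as a Borel set; (ii) whenever $\mathbb{P}_\infty\neq \emptyset$, the set $\mathbb{P}_\infty$ is itself an $N$-step $\epsilon$-PCIS; and (iii) $\mathbb{P}_\infty$ contains every $N$-step $\epsilon$-PCIS that lies in $\mathbb{Q}$. Monotonicity is immediate from the update rule $\mathbb{P}_{i+1}=\tilde{\mathbb{S}}^{*}_{\epsilon,N}(\mathbb{P}_i)\subseteq \mathbb{P}_i$, which at once yields both the existence of the set limit and its Borel measurability.

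To prove invariance in (ii), I would invoke the equivalence in Proposition~\ref{finitePCIS1}(iii) and verify $\mathbb{S}^{*}_{\epsilon,N}(\mathbb{P}_\infty)=\mathbb{P}_\infty$. The inclusion $\mathbb{S}^{*}_{\epsilon,N}(\mathbb{P}_\infty)\subseteq \mathbb{P}_\infty$ is by definition. For the reverse inclusion, pick $x\in \mathbb{P}_\infty$. Then $x\in \mathbb{P}_{i+1}$ for every $i$, and Proposition~\ref{borelSBRS} bridges the measure-zero gap between $\tilde{\mathbb{S}}^{*}_{\epsilon,N}(\mathbb{P}_i)$ and $\mathbb{S}^{*}_{\epsilon,N}(\mathbb{P}_i)$, so that $V^{*}_{0,\mathbb{P}_i}(x)\geq \epsilon$ for all $i$. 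Passing to the limit $i\to\infty$ would give $V^{*}_{0,\mathbb{P}_\infty}(x)\geq \epsilon$, hence $x\in \mathbb{S}^{*}_{\epsilon,N}(\mathbb{P}_\infty)$.

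The main obstacle is exactly this passage to the limit, which requires the continuity statement $V^{*}_{0,\mathbb{P}_i}(x)\to V^{*}_{0,\mathbb{P}_\infty}(x)$ as $\mathbb{P}_i\downarrow \mathbb{P}_\infty$. I would prove it by backward induction on the horizon index $k$ in the DP recursion~(\ref{Vk})--(\ref{VN}). The base $V^{*}_{N,\mathbb{P}_i}=\mathbbm{1}_{\mathbb{P}_i}\downarrow \mathbbm{1}_{\mathbb{P}_\infty}$ is immediate. Assuming $V^{*}_{k+1,\mathbb{P}_i}\downarrow V^{*}_{k+1,\mathbb{P}_\infty}$ pointwise, the integrand is uniformly dominated by $1$, so dominated convergence gives pointwise convergence in $u$ of the inner integral. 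To commute the limit with the supremum over $u$, I would invoke Assumption~\ref{finiteassum}: compactness of $\mathbb{U}$ together with the compactness of the level sets $\mathbb{U}_k(x,\lambda)$ lets one select optimizers $u^{*}_i$ attaining $V^{*}_{k,\mathbb{P}_i}(x)$, extract a cluster point $u^{*}$, and verify that $u^{*}$ attains $V^{*}_{k,\mathbb{P}_\infty}(x)$. This sup-interchange is where most of the real work sits.

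Finally, for maximality (iii), let $\mathbb{Q}'\subseteq \mathbb{Q}$ be any $N$-step $\epsilon$-PCIS. I would show $\mathbb{Q}'\subseteq \mathbb{P}_i$ by induction on $i$: the base $\mathbb{Q}'\subseteq \mathbb{P}_0=\mathbb{Q}$ is trivial, and for the induction step, $\mathbb{Q}'\subseteq \mathbb{P}_i$ combined with Property~\ref{finiteproperty1} upgrades the defining inequality $\sup_{\bm{\mu}}p_{N,\mathbb{Q}'}^{\bm{\mu}}(x)\geq \epsilon$ for $x\in \mathbb{Q}'$ to $\sup_{\bm{\mu}}p_{N,\mathbb{P}_i}^{\bm{\mu}}(x)\geq \epsilon$, so $\mathbb{Q}'\subseteq \mathbb{S}^{*}_{\epsilon,N}(\mathbb{P}_i)$; one more appeal to Proposition~\ref{borelSBRS} then yields $\mathbb{Q}'\subseteq \mathbb{P}_{i+1}$. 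Taking the intersection over $i$ concludes $\mathbb{Q}'\subseteq \mathbb{P}_\infty$, which together with (ii) gives the theorem.
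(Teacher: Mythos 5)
Your proposal follows the same skeleton as the paper's proof---monotonicity of the iteration, existence of the set limit $\mathbb{P}_\infty=\bigcap_i\mathbb{P}_i$, and identification of that limit as the largest fixed point---but it supplies substance where the paper does not. The paper's argument for the second claim consists of a single appeal to ``fixed-point theory,'' whereas you correctly isolate the two facts that appeal actually requires: that $\mathbb{P}_\infty$ satisfies $\mathbb{S}^{*}_{\epsilon,N}(\mathbb{P}_\infty)=\mathbb{P}_\infty$, and that every $N$-step $\epsilon$-PCIS contained in $\mathbb{Q}$ survives each iteration. Your maximality induction via Property~\ref{finiteproperty1} is exactly right and is the argument the paper leaves implicit. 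Your continuity step $V^{*}_{0,\mathbb{P}_i}(x)\to V^{*}_{0,\mathbb{P}_\infty}(x)$, proved by backward induction with monotone convergence of the inner integral and a compact-level-set argument (nested nonempty compact sets $\{u:f_i(u)\geq\lambda\}$ have a common point) to commute limit and supremum, is the genuinely nontrivial content that the paper omits entirely; you are right that this is where the work sits, and Assumption~\ref{finiteassum} is indeed what makes it go through (strictly, one needs it for each $\mathbb{P}_i$, not just for $\mathbb{Q}$). The one point to tighten is your use of Proposition~\ref{borelSBRS}: membership of a specific point $x$ in $\tilde{\mathbb{S}}^{*}_{\epsilon,N}(\mathbb{P}_i)$ does not by itself yield $x\in\mathbb{S}^{*}_{\epsilon,N}(\mathbb{P}_i)$, since a null symmetric difference controls measures, not individual points (unless one reads ``for any $p\in\mathcal{P}(\mathbb{X})$'' as including Dirac measures, which collapses the two sets); this wrinkle, however, originates in the paper's own setup and is not handled there either.
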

\begin{proof}
From Algorithm $1$ and Lemma~\ref{borelSBRS}, we have that if the termination condition does not hold, $\mathbb{P}_{i+1}\subset \mathbb{P}_i$. It follows that
the sequence $\{\mathbb{P}_i\}_{i\in \mathbb{N}}$ is nonincreasing. Then,
\begin{eqnarray*}
\liminf_{i\rightarrow\infty}\mathbb{P}_i=\bigcup\limits_{i\geq 1}\bigcap\limits_{j\geq i}\mathbb{P}_j=\bigcap\limits_{j\geq 1}\mathbb{P}_j=\bigcap\limits_{i\geq 1}\bigcup\limits_{j\geq i}\mathbb{P}_j=\limsup_{i\rightarrow\infty}\mathbb{P}_i,
\end{eqnarray*}
which suggests the existence of $\lim_{i\rightarrow\infty}\mathbb{P}_i$.
Furthermore, if $\lim_{i\rightarrow\infty}\mathbb{P}_i$ is nonempty, we conclude that it is  the largest $N$-step PCIS within $\mathbb{Q}$ based on the fixed-point theory.
\end{proof}

To facilitate the practical implementation of Algorithm 1, we need to address two important properties: the computational tractability  of $V^*_{0,\mathbb{P}_i}(x)$, $\forall x\in \mathbb{P}_i$, and the finite-step convergence of Algorithm 1. In the following, we will derive these two properties for discrete and continuous spaces, respectively. It is shown that if the spaces are discrete, the properties are guaranteed and in particular
at each iteration we only need to solve an LP to compute the exact value of $V^*_{0,\mathbb{P}_i}$. If the spaces are continuous, we will design a discretization algorithm with convergence guarantee, which enables us to preserve the above two properties.

\subsubsection{Discrete state and control spaces}\label{finitediscretespace}
If the state and control spaces are discrete, i.e., they are finite sets,  the stochastic kernel $T(y|x,u)$ denotes the transition probability from  state $x\in \mathbb{X}$ to  state $y\in \mathbb{X}$ under  control action $u\in\mathbb{U}_x$, which satisfies that $\sum_{y\in\mathbb{X}}T(y|x,u)=1$, $\forall x\in \mathbb{X}$ and $u\in\mathbb{U}_x$.

In this case, according to Theorem 1 of~\cite{Bhattacharya(2017)}, we can exactly compute $V_{0,\mathbb{P}_i}^*(x)$ via an LP. Moreover, the existence of the optimal Markov policy can be always guaranteed.

\begin{lemma}\label{lemmafinite}
 Given any set $\mathbb{P}_i\subset \mathbb{X}$, the value functions $V_{k,\mathbb{P}_i}^*$ in (\ref{Vk}) can be obtained by solving an LP:
\begin{subequations}\label{LPfinite}
  \begin{eqnarray}
  && \hspace{-1.2cm}\min \ \sum_{k=0}^N\sum_{x\in \mathbb{P}_i}v_k(x) \\
  && \hspace{-1.2cm}\text{subject to}\ \forall x\in \mathbb{P}_i \nonumber \\
  && \hspace{-1.2cm}
      v_k(x)\geq \sum_{y\in \mathbb{P}_i}v_{k+1}(y)T(y|x,u), \forall u\in \mathbb{U}_x, \forall k\in \mathbb{N}_{[0,N-1]},\\
  && \hspace{-1.2cm}
  v_N(x)\geq 1,
 % && \hspace{-0.4cm} v_k(x)\in \mathbb{R}, \forall k\in \mathbb{N}_{[0,N]},
   \end{eqnarray}
\end{subequations}
which gives $V^*_{k,\mathbb{P}_i}(x)=v^*_k(x)$, $\forall x\in\mathbb{P}_i$ and $\forall k\in \mathbb{N}_{[0,N]}$, where $v^*_k$ is the optimal solution of  (\ref{LPfinite}). The optimal Markov policy $\bm{\mu}_{\mathbb{P}_i}^*=(\mu_{0,\mathbb{P}_i}^*,\mu^*_{1,\mathbb{P}_i},\ldots,\mu^*_{N-1,\mathbb{P}_i})$ is given by $\mu^*_{k,\mathbb{P}_i}(x)=u$ where $u\in \mathbb{U}_x$ is such that
  \begin{eqnarray}\label{dualLPfinite}
  v_k^*(x)= \sum_{y\in \mathbb{P}_i}v^*_{k+1}(y)T(y|x,u).
   \end{eqnarray}
\end{lemma}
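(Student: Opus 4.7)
My plan is to verify the LP (\ref{LPfinite}) reformulation by a standard \emph{smallest supersolution} argument: I will show that the true value functions $V^*_{k,\mathbb{P}_i}$ form a feasible point of (\ref{LPfinite}) and that every feasible point is pointwise at least $V^*_{k,\mathbb{P}_i}$, so that minimizing $\sum_{k,x} v_k(x)$ forces equality pointwise. In the discrete setting the integral in (\ref{Vk}) reduces to the finite sum $\sum_{y\in\mathbb{P}_i} V^*_{k+1,\mathbb{P}_i}(y) T(y|x,u)$, since $T(\cdot|x,u)$ is a probability mass function and the indicator $\mathbbm{1}_{\mathbb{P}_i}$ confines the summation to $\mathbb{P}_i$; accordingly, the Bellman recursion for $x \in \mathbb{P}_i$ becomes $V^*_{k,\mathbb{P}_i}(x) = \max_{u \in \mathbb{U}_x} \sum_{y \in \mathbb{P}_i} V^*_{k+1,\mathbb{P}_i}(y) T(y|x,u)$, with the max attained because $\mathbb{U}_x$ is finite.

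To prove feasibility of $V^*$, I observe that the terminal condition $V^*_{N,\mathbb{P}_i}(x) = 1$ satisfies $v_N(x) \geq 1$, and for each $u \in \mathbb{U}_x$ the Bellman equation gives $V^*_{k,\mathbb{P}_i}(x) \geq \sum_y V^*_{k+1,\mathbb{P}_i}(y) T(y|x,u)$, which is precisely the transition constraint in (\ref{LPfinite}). For the lower bound $v_k(x) \geq V^*_{k,\mathbb{P}_i}(x)$ on any feasible $\{v_k\}$, I would use backward induction on $k$: the base case $k = N$ is immediate from $v_N(x) \geq 1$, and for the inductive step, assuming $v_{k+1}(y) \geq V^*_{k+1,\mathbb{P}_i}(y)$ for all $y\in\mathbb{P}_i$, the transition constraint yields
\begin{equation*}
v_k(x) \geq \sum_y v_{k+1}(y) T(y|x,u) \geq \sum_y V^*_{k+1,\mathbb{P}_i}(y) T(y|x,u)
\end{equation*}
for every $u \in \mathbb{U}_x$, and maximizing the right-hand side over $u$ gives $v_k(x) \geq V^*_{k,\mathbb{P}_i}(x)$.

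Combining these two pieces, the optimum of (\ref{LPfinite}) is at least $\sum_{k,x} V^*_{k,\mathbb{P}_i}(x)$, while the feasible choice $v = V^*$ attains this value; because the lower bound is pointwise, a strict inequality $v_k^*(x) > V^*_{k,\mathbb{P}_i}(x)$ at even a single $(k,x)$ would force a strictly larger sum and contradict optimality, so $v_k^*(x) = V^*_{k,\mathbb{P}_i}(x)$ everywhere. For the policy, finiteness of $\mathbb{U}_x$ makes the argmax in the Bellman equation attained, so any $u$ satisfying (\ref{dualLPfinite}) reproduces the maximizer prescribed by Lemma~\ref{TheAbata1} and thus defines an optimal Markov policy; Assumption~\ref{finiteassum} holds automatically because every subset of a finite $\mathbb{U}_x$ is compact. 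I do not foresee a serious obstacle—the discrete setting sidesteps the measurability complications of the general case—but the main point to get right is stating the lower bound in pointwise form, so that scalar minimization of the sum yields coordinatewise equality with $V^*$ rather than merely equality of aggregated objective values.
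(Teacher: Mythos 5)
Your proposal is correct. Note that the paper itself gives no argument for this lemma---its ``proof'' is a one-line deferral to the cited reference on LP formulations of finite-horizon MDPs---and the supersolution argument you give is precisely the standard one underlying that reference: the true value functions $V^*_{k,\mathbb{P}_i}$ are feasible for (\ref{LPfinite}); any feasible $\{v_k\}$ dominates them pointwise by backward induction (using $T(y|x,u)\geq 0$ to pass the inductive hypothesis through the sum); and since the objective aggregates nonnegatively over all $(k,x)$, pointwise domination plus feasibility of $V^*$ forces coordinatewise equality at the optimum. Your handling of the policy extraction via (\ref{dualLPfinite}) and the observation that Assumption~\ref{finiteassum} is vacuous for finite $\mathbb{U}_x$ are also correct. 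In short, you have supplied a self-contained proof where the paper supplies only a citation, and the emphasis you place on the pointwise (rather than merely aggregate) lower bound is exactly the step that makes the reformulation exact.
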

\begin{proof}
  See Theorem 1 in \cite{Bhattacharya(2017)} for the proof.
\end{proof}

\begin{corollary}\label{finiteThe}
  For discrete state and control spaces, Algorithm~1 converges in a finite number of iterations.  Furthermore, at each iteration, the $N$-step invariance probability $V^*_{0,\mathbb{P}_i}(x)$, $\forall x\in \mathbb{P}_i$, can be computed via the LP (\ref{LPfinite}) and the corresponding optimal policy is determined by (\ref{dualLPfinite}).
\end{corollary}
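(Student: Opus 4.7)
The plan is to split the corollary into its two claims and handle them in order. The LP-based computation of $V^*_{0,\mathbb{P}_i}$ together with the extraction of the optimal Markov policy through (\ref{dualLPfinite}) is essentially a restatement of Lemma~\ref{lemmafinite} applied at each iteration, so the task there is merely to check that the hypotheses of that lemma are met inside Algorithm~1. The genuinely new content is the finite-step termination, which I would prove by a straightforward cardinality argument exploiting discreteness of $\mathbb{X}$.

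First I would verify that Lemma~\ref{lemmafinite} applies at every iteration. Since the state and control spaces are finite, the set $\mathbb{U}_k(x,\lambda)$ in Assumption~\ref{finiteassum} is a finite subset of a finite set $\mathbb{U}_x$, hence trivially compact, so Assumption~\ref{finiteassum} holds automatically for every choice of $\mathbb{P}_i \subseteq \mathbb{Q}$. Consequently Lemma~\ref{lemmafinite} yields $V^*_{k,\mathbb{P}_i}(x) = v^*_k(x)$ for all $x \in \mathbb{P}_i$ and $k \in \mathbb{N}_{[0,N]}$ at iteration $i$, and any $u \in \mathbb{U}_x$ satisfying (\ref{dualLPfinite}) is an optimal action. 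Moreover, in the discrete setting the Borel-measurability distinction between $\mathbb{S}^{*}_{\epsilon,N}(\mathbb{P}_i)$ and $\tilde{\mathbb{S}}^{*}_{\epsilon,N}(\mathbb{P}_i)$ collapses, since every subset of the discrete space is trivially Borel; so step~3 of Algorithm~1 reduces to the explicit set $\{x \in \mathbb{P}_i : v^*_0(x) \geq \epsilon\}$.

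For finite-step termination, I would invoke the monotonicity established in the proof of Theorem~\ref{finitemaxPCIS1}: whenever the stopping test fails, $\mathbb{P}_{i+1} \subsetneq \mathbb{P}_i$. Since $\mathbb{P}_0 = \mathbb{Q}$ is a subset of the finite set $\mathbb{X}$, the cardinality $|\mathbb{P}_i|$ is a nonnegative integer that strictly decreases at every non-terminating iteration. Because a strictly decreasing sequence of nonnegative integers cannot exceed length $|\mathbb{Q}| + 1$, the algorithm must reach the stopping condition $\mathbb{P}_{i+1} = \mathbb{P}_i$ within at most $|\mathbb{Q}|$ iterations. Combining this with Proposition~\ref{finitePCIS1} then gives that the output is indeed an $N$-step $\epsilon$-PCIS (or the empty set).

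I do not foresee a serious obstacle here: the corollary is essentially a bookkeeping consequence of Theorem~\ref{finitemaxPCIS1} and Lemma~\ref{lemmafinite} in the finite setting, and the only point requiring any care is noting that Assumption~\ref{finiteassum} is automatic for finite $\mathbb{U}$ so that the LP characterization is legitimate at every iteration, not just at $i=0$.
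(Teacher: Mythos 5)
Your proposal is correct and follows essentially the same route as the paper: finite-step termination from the strict decrease of $\mathbb{P}_i$ (Theorem~\ref{finitemaxPCIS1}) combined with the finite cardinality of $\mathbb{Q}$, and the LP characterization from Lemma~\ref{lemmafinite}. Your added observations --- that Assumption~\ref{finiteassum} holds automatically for finite $\mathbb{U}$ and that the measurability distinction collapses in the discrete setting --- are correct and merely make explicit what the paper leaves implicit.
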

\begin{proof}
The finite-step convergence  of Algorithm~1 follows from Theorem~\ref{finitemaxPCIS1} and the finite cardinality of~$\mathbb{Q}$.  The remaining part follows from  Lemma~\ref{lemmafinite}.
\end{proof}

\begin{remark}\label{Remark:Alg1Complexity}
When implementing Algorithm 1 to a system with discrete spaces, the maximal  number of iterations is $|\mathbb{Q}|$. At each iteration, an LP is solved to compute the value of $V^*_{0,\mathbb{P}_i}(x)$, $\forall x\in \mathbb{P}_i$. The number of the decision values in the LP is at most $|\mathbb{Q}|(N+1)$ and the number of  constraints is at most $|\mathbb{Q}|(N |\mathbb{U}|+1)$. It follows from  \cite{Megiddo(1984)} that Algorithm 1 can be implemented in $O(|\mathbb{Q}|^2(N |\mathbb{U}|+1))$ time.
\end{remark}

\subsubsection{Continous state and control action spaces}
In order to preserve the computational tractability of $V^*_{0,\mathbb{P}_i}$ and the finite-step convergence of Algorithm 1, if  the state and control spaces are both continuous, we first discretize the spaces with convergence guarantee. Then, we adapt Algorithm 1 to compute an approximate $N$-step $\epsilon$-PCIS  within a given set.

Assume that $\mathbb{X}\subseteq \mathbb{R}^{n_x}$ and $\mathbb{U}\subseteq \mathbb{R}^{n_u}$ for some $n_x,n_u\in\mathbb{N}$.  For simplicity, we use Euclidean metric for the spaces $\mathbb{X}$ and~$\mathbb{U}$. For any $\mathbb{Q}\in  \mathcal{B}(\mathbb{X})$, we define $\phi(\mathbb{Q})=Leb(\mathbb{Q})$ where $Leb(\cdot)$ denotes the Lebesgue measure of sets.  We suppose that the stochastic kernel $T(\cdot|x,u)$ admits a density $t(y|x,u)$, which represents the probability density of $y$ given the current state $x$ and the control action $u$.

Now we consider Problem \ref{finiteproblem}, where we assume that the given set $\mathbb{Q}\in \mathcal{B}(\mathbb{X})$ is compact, which implies that $\phi(\mathbb{Q})$ is bounded. We further suppose that the density function satisfies the following assumption.

\begin{assumption}\label{assumt}
There exists a constant $L$ such that for any $x,x',y,y'\in \mathbb{Q}$, and $u,u'\in\mathbb{U}$,  $$|t(y|x,u)-t(y'|x',u')|\leq L(\|y-y'\|+\|x-x'\|+\|u-u'\|).$$
\end{assumption}

\paragraph{Discretization}
We discretize the compact set $\mathbb{Q}\subset \mathbb{X}$ into $m_x$  pair-wise disjoint nonempty Borel sets $\mathbb{Q}_{i}$,  $i\in \mathbb{N}_{[1,m_x]}$, i.e.,  $\mathbb{Q}=\cup_{i=1}^{m_x}\mathbb{Q}_{i}$. We pick a representative state from each set $\mathbb{Q}_{i}$, denoted by $q_i$. Let $\hat{\mathbb{Q}}=\{q_i,i\in\mathbb{N}_{[1,m_x]}\}$,  $d_i=\sup_{x,y\in \mathbb{Q}_i}\|x-y\|$, and  $D_x=\max_{i\in\mathbb{N}_{[1,m_x]}}d_i$.

Similarly, the compact control space $\mathbb{U}$ is divided into  $m_u$ pair-wise disjoint nonempty Borel sets  $\mathbb{C}_{i}$, $i\in\mathbb{N}_{[1,m_u]}$, i.e., $\mathbb{U}=\cup_{i=1}^{m_u}\mathbb{C}_{i}$. We pick a representative element from the set $\mathbb{C}_{i}$, denoted by $\hat{u}_i$. Let $\hat{\mathbb{U}}=\{\hat{u}_i,i\in\mathbb{N}_{[1,m_u]}\}$, $l_i=\sup_{x,y\in \mathbb{C}_i}\|x-y\|$, and $D_u=\max_{i\in\mathbb{N}_{[1,m_u]}}l_i$.

Let the grid size be a constant $\delta \geq \max\{D_x,D_u\}$.
For each $x\in \mathbb{Q}$, define the set of admissible discrete control actions as
\begin{eqnarray}\label{tildeUx}
\hat{\mathbb{U}}_x=\{\hat{u}\in\hat{\mathbb{U}}\mid \|u-\hat{u}\|\leq \delta \ \text{for some} \  u\in \mathbb{U}_{s_x}\},
\end{eqnarray}
where $s_x$ is the representative state of $\mathbb{Q}_i$  to which $x$ belongs, i.e., $s_x=q_i$ if $x\in \mathbb{Q}_i$.
Following \cite{Chow(1991)}, the following lemma  shows that each  $x\in \mathbb{Q}$ has a nonempty admissible discretized control set.

\begin{lemma}\label{discretizedU}
For each $q_i\in\hat{\mathbb{Q}}$, the set $\hat{\mathbb{U}}_{q_i}$ is nonempty and $\hat{\mathbb{U}}_x=\hat{\mathbb{U}}_{q_i}$, $\forall x\in\mathbb{Q}_i$.
\end{lemma}
\begin{proof}
Since   the admissible control set $\mathbb{U}_{s_x}$ is nonempty, $\forall x\in\mathbb{Q}$, there exists $\hat{u}\in \hat{\mathbb{U}}$ such that $\|u-\hat{u}\|\leq \delta$, $\forall u\in \mathbb{U}_{s_x}$. Hence, by the definition of $s_x$, we have that the set $\hat{\mathbb{U}}_{q_i}$ is nonempty for each $q_i\in\hat{\mathbb{Q}}$. Furthermore, from (\ref{tildeUx}), it is easy to obtain that $\hat{\mathbb{U}}_x=\hat{\mathbb{U}}_{q_i}$, $\forall x\in\mathbb{Q}_i$.
\end{proof}

As in~\cite{Chow(1991)}, let us define the function $\hat{t}: \mathbb{Q}\times \mathbb{Q}\times \hat{\mathbb{U}}\rightarrow \mathbb{R}$
\begin{eqnarray}
&&\hspace{-1cm}\hat{t}(y|x,\hat{u})= \begin{cases}
  \frac{t(s_y|s_x,\hat{u})}{\int_{\mathbb{Q}}t(s_z|s_x,\hat{u})dz}
  , & \mbox{if } \int_{\mathbb{Q}}t(s_z|s_x,\hat{u})dz\geq 1, \\
  t(s_y|s_x,\hat{u}), & \mbox{otherwise}.
\end{cases}\label{dist}
\end{eqnarray}
From (\ref{dist}), we observe that all states $y\in \mathbb{Q}_i$ enjoy the same stochastic kernel. An approximate stochastic control system is given by a triple $\hat{\mathcal{S}}_{\mathbb{Q}}=(\hat{\mathbb{Q}},\hat{\mathbb{U}},\hat{T})$. Here the transition probability $\hat{T}(q_j|q_i,\hat{u})$ is defined by
$\hat{T}(q_j|q_i,\hat{u})=\int_{\mathbb{Q}_j}\hat{t}(y|q_i,\hat{u})dy$,
where $q_i,q_j\in \hat{\mathbb{Q}}$ with $q_i\in\mathbb{Q}_i$ and $q_j\in\mathbb{Q}_j$, and $\hat{u}\in \hat{\mathbb{U}}$.
\paragraph{Approximation of PCISs}
For the approximate system $\hat{\mathcal{S}}_{\mathbb{Q}}$,
the discretized version of the DP (\ref{Vk}) is given by
\begin{eqnarray*}\label{disV}
\begin{cases}
\hat{V}^*_{N,\mathbb{Q}}(q_i)=1, \\
\hat{V}^*_{k,\mathbb{Q}}(q_i)=\max\limits_{\hat{u}\in\hat{\mathbb{U}}}\big(\sum\limits_{j=1}^{m_x}
\hat{V}^*_{k+1,\mathbb{Q}}(q_j)\hat{T}(q_j|q_i,\hat{u})\big), \forall k\in\mathbb{N}_{[0,N-1]}.
\end{cases}
\end{eqnarray*}
For each $x\in \mathbb{Q}_i$, $\hat{V}^*_{k,\mathbb{Q}}(x)=\hat{V}_{k,\mathbb{Q}}^*(q_i),  \forall k\in\mathbb{N}_{[0,N]}$.
 We define the discretized optimal Markov policy $\hat{\bm{\mu}}_{\mathbb{Q}}^*=(\hat{\mu}^*_{0,\mathbb{Q}},\ldots,\hat{\mu}^*_{N-1,\mathbb{Q}})$ as
\begin{eqnarray*}\label{disMP}
 &&\hspace{-0.3cm}\hat{\mu}^*_{k,\mathbb{Q}}(q_i)=\arg \max_{\hat{u}\in \hat{\mathbb{U}}}\int_{\mathbb{Q}}\hat{V}^*_{k+1,\mathbb{Q}}(y)\hat{t}(y|q_i,\hat{u})dy,\nonumber \\
  &&\hspace{1cm}=\arg \max\limits_{\hat{u}\in\hat{\mathbb{U}}}\big(\sum\limits_{j=1}^{m_x}
\hat{V}^*_{k+1,\mathbb{Q}}(q_j)\hat{T}(q_j|q_i,\hat{u})\big).
\end{eqnarray*}
For each $x\in \mathbb{Q}_i$,
$\hat{\mu}^*_{k,\mathbb{Q}}(x)= \hat{\mu}^*_{k,\mathbb{Q}}(q_i), \ \forall k\in\mathbb{N}_{[0,N-1]}$.

\begin{remark}
  Since  the state and control action spaces of the approximated system $\hat{\mathcal{S}}$ are finite,   the value of $\hat{V}^*_{k,\mathbb{Q}}$ can be computed via the LP (\ref{LPfinite}) and the corresponding optimal policy can be determined by (\ref{dualLPfinite}). In addition, all the states in each $\mathbb{Q}_i$ share the same approximate $N$-step invariance probability and optimal policy as the representative state $q_i\in \mathbb{Q}_i$.
\end{remark}

\begin{lemma}\label{Thedisfinite}
  Under Assumptions~\ref{finiteassum} and \ref{assumt}, the functions $V^*_{k,\mathbb{Q}}(x)$ and  $\hat{V}^*_{k,\mathbb{Q}}(x)$ satisfy that $\forall x\in \mathbb{Q}$,
  \begin{eqnarray}
  &&|V^*_{k,\mathbb{Q}}(x)-\hat{V}^*_{k,\mathbb{Q}}(x)|\leq \tau_k(\mathbb{Q})\delta,\label{difV1}
  \end{eqnarray}
 where
 \begin{eqnarray}\label{tauk}
 \begin{cases}
 \tau_N(\mathbb{Q})=0,\\
 \tau_k(\mathbb{Q})=4\phi(\mathbb{Q})L+\tau_{k+1}(\mathbb{Q}), \ \forall k\in \mathbb{N}_{[0,N-1]}.
 \end{cases}
 \end{eqnarray}
\end{lemma}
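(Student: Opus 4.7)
The plan is to prove the bound \eqref{difV1} by backward induction on $k$. The base case $k = N$ is immediate from the initial conditions $V^*_{N,\mathbb{Q}}(x) = \hat{V}^*_{N,\mathbb{Q}}(x) = 1$ on $\mathbb{Q}$, which matches $\tau_N(\mathbb{Q}) = 0$. For the inductive step, I assume $|V^*_{k+1,\mathbb{Q}}(y) - \hat{V}^*_{k+1,\mathbb{Q}}(y)| \le \tau_{k+1}(\mathbb{Q})\delta$ on $\mathbb{Q}$ and bound both signs of $V^*_{k,\mathbb{Q}}(x) - \hat{V}^*_{k,\mathbb{Q}}(x)$ for a fixed $x \in \mathbb{Q}$, with $q_i = s_x$ the representative of the cell $\mathbb{Q}_i$ containing $x$.

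For the direction $V^*_{k,\mathbb{Q}}(x) - \hat{V}^*_{k,\mathbb{Q}}(x) \le \tau_k(\mathbb{Q})\delta$, I pick a near-optimal $u^* \in \mathbb{U}$ for the recursion~\eqref{Vk} (invoking Assumption~\ref{finiteassum} to avoid an epsilon-argument) and choose the nearest discrete $\hat{u}^* \in \hat{\mathbb{U}}$; by Assumption~\ref{Dissize} and $\eta \ge \delta$ we have $\|u^* - \hat{u}^*\| \le \delta$, and $\hat{u}^* \in \hat{\mathbb{U}}_x$ by Lemma~\ref{discretizedU}. Then I telescope $\int_\mathbb{Q} V^*_{k+1,\mathbb{Q}}(y)\,t(y|x,u^*)\,dy$ to $\int_\mathbb{Q} \hat{V}^*_{k+1,\mathbb{Q}}(y)\,\hat{t}(y|q_i,\hat{u}^*)\,dy$ through four intermediate integrals, successively replacing $(x,u^*) \mapsto (q_i,\hat{u}^*)$ in the kernel, $y \mapsto s_y$ inside $t$, $t \mapsto \hat{t}$ via the normalization, and finally $V^*_{k+1,\mathbb{Q}} \mapsto \hat{V}^*_{k+1,\mathbb{Q}}$.

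Each of the first three substitutions is bounded by $L\phi(\mathbb{Q})\delta$: the first uses Assumption~\ref{assumt} to give $\int_\mathbb{Q}|t(y|x,u^*)-t(y|q_i,\hat{u}^*)|\,dy \le L\phi(\mathbb{Q})(\|x-q_i\| + \|u^*-\hat{u}^*\|) \le 2L\phi(\mathbb{Q})\delta$, and the third is $\int_\mathbb{Q}|t(y|q_i,\hat{u}^*)-t(s_y|q_i,\hat{u}^*)|\,dy \le L\phi(\mathbb{Q})\delta$, all multiplied by $V^*_{k+1,\mathbb{Q}} \le 1$. The normalization step is where I expect the main technical subtlety: letting $Z = \int_\mathbb{Q} t(s_z|q_i,\hat{u}^*)\,dz$, the error reduces to $\max(Z-1,0) \cdot \sup V^*_{k+1,\mathbb{Q}}$, and when $Z \ge 1$ the inequality $Z - 1 = \int_\mathbb{Q}[t(s_z|q_i,\hat{u}^*) - t(z|q_i,\hat{u}^*)]\,dz - \int_{\mathbb{X}\setminus\mathbb{Q}} t(z|q_i,\hat{u}^*)\,dz \le L\phi(\mathbb{Q})\delta$ follows from Assumption~\ref{assumt} together with $\int_\mathbb{X} t(z|q_i,\hat{u}^*)\,dz = 1$. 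The final substitution contributes $\tau_{k+1}(\mathbb{Q})\delta$ via the induction hypothesis and $\int_\mathbb{Q}\hat{t} \le 1$. Summing the four geometric contributions with the inductive one yields the desired $(4L\phi(\mathbb{Q}) + \tau_{k+1}(\mathbb{Q}))\delta = \tau_k(\mathbb{Q})\delta$.

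The reverse direction $\hat{V}^*_{k,\mathbb{Q}}(x) - V^*_{k,\mathbb{Q}}(x) \le \tau_k(\mathbb{Q})\delta$ is symmetric: pick $\hat{u}^*$ attaining the discrete maximum, regard it as a feasible element of $\mathbb{U}$, and run the same chain of four substitutions in reverse. The main obstacle I anticipate is making the normalization step clean, since the piecewise definition of $\hat{t}$ forces a case split on whether $Z \ge 1$; I expect to unify the two cases by noting $\hat{t}(y|q_i,\hat{u}^*) = t(s_y|q_i,\hat{u}^*)/\max(Z,1)$ throughout, so that the total-variation distance $\int_\mathbb{Q}|t(s_y|q_i,\hat{u}^*) - \hat{t}(y|q_i,\hat{u}^*)|\,dy$ collapses to $|Z - \max(Z,1)|/\max(Z,1) \cdot Z = \max(Z-1,0)$, which is $O(\delta)$ by the argument above.
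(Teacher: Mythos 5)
Your proof is correct and follows essentially the same route as the paper's: backward induction on $k$, telescoping the optimal-value integrals through intermediate terms that isolate the control discretization ($\le L\phi(\mathbb{Q})\delta$), the state discretization, the kernel normalization (your $\max(Z-1,0)\le L\phi(\mathbb{Q})\delta$ step is exactly the paper's auxiliary bound $\int_{\mathbb{Q}}|\hat{t}-t|\le 2\phi(\mathbb{Q})L\delta$), and the inductive error $\tau_{k+1}(\mathbb{Q})\delta$, summing to the same constant $4\phi(\mathbb{Q})L$. The only cosmetic difference is that the paper first establishes the comparison at the representative point $q_i$ and then transfers to general $x\in\mathbb{Q}_i$ via a separate Lipschitz-continuity lemma for $V^*_{k,\mathbb{Q}}$, whereas you absorb the shift $x\mapsto q_i$ directly into the kernel's Lipschitz bound from Assumption~\ref{assumt}; both accountings are equivalent.
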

\begin{proof}
  See Appendix B.
\end{proof}

\begin{remark}
 Lemma~\ref{Thedisfinite} guarantees  convergence as the grid size tends to zero and generalizes the case considered in \cite{Abate(2007)},
which only discretizes the state space for a given finite control space.
To  prove Lemma~\ref{Thedisfinite}, we need to show that (i) the value functions in  (\ref{Vk})  are Lipschitz
continuous (Lemma \ref{LipschitzV}), which is similar to Theorem 8 in \cite{Abate(2007)}, and (ii) the  difference between the approximate
density function and the original density function is bounded (Lemma \ref{Lipschitzt}), which is different from that in \cite{Abate(2007)}.
\end{remark}

\begin{theorem}\label{finitediffed}
Let Assumptions~\ref{finiteassum} and \ref{assumt} hold. Consider a compact set $\mathbb{Q}\in \mathcal{B}(\mathbb{X})$ and a corresponding discretized set $\hat{\mathbb{Q}}$ of $\mathbb{Q}$. If $\hat{\mathbb{Q}}$ is an $N$-step $\hat{\epsilon}$-PCIS for the approximate system $\hat{\mathcal{S}}_{\mathbb{Q}}=(\hat{\mathbb{Q}},\hat{\mathbb{U}},\hat{T})$, and $\hat{\epsilon}\geq\tau_0(\mathbb{Q})\delta $, the set $\mathbb{Q}$ is an $N$-step $\epsilon$-PCIS for the system $\mathcal{S}$,  where $\epsilon=\hat{\epsilon}-\tau_0(\mathbb{Q})\delta$.
\end{theorem}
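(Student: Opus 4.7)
The plan is to chain together Proposition~\ref{finitePCIS1}, the piecewise-constant extension of the discretized value function, and the pointwise error bound in Lemma~\ref{Thedisfinite}. The whole argument is essentially a transfer of inequalities from $\hat{V}^*_{0,\mathbb{Q}}$ to $V^*_{0,\mathbb{Q}}$.

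First I would apply Proposition~\ref{finitePCIS1}, but to the approximate system $\hat{\mathcal{S}}_{\mathbb{Q}}$ in place of $\mathcal{S}$. The hypothesis that $\hat{\mathbb{Q}}$ is an $N$-step $\hat{\epsilon}$-PCIS for $\hat{\mathcal{S}}_{\mathbb{Q}}$ then translates into $\hat{V}^*_{0,\mathbb{Q}}(q_i)\geq\hat{\epsilon}$ for every representative $q_i\in\hat{\mathbb{Q}}$. By the definition of the discretized DP, $\hat{V}^*_{k,\mathbb{Q}}$ is extended constantly to each cell, i.e.\ $\hat{V}^*_{0,\mathbb{Q}}(x)=\hat{V}^*_{0,\mathbb{Q}}(q_i)$ whenever $x\in\mathbb{Q}_i$. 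Since $\mathbb{Q}=\bigcup_{i=1}^{m_x}\mathbb{Q}_i$, this promotes the pointwise lower bound to all of $\mathbb{Q}$: $\hat{V}^*_{0,\mathbb{Q}}(x)\geq\hat{\epsilon}$ for every $x\in\mathbb{Q}$.

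Next I would invoke Lemma~\ref{Thedisfinite} at $k=0$, which provides the uniform error bound $|V^*_{0,\mathbb{Q}}(x)-\hat{V}^*_{0,\mathbb{Q}}(x)|\leq\tau_0(\mathbb{Q})\delta$ for all $x\in\mathbb{Q}$. Combining these two steps gives
\begin{equation*}
V^*_{0,\mathbb{Q}}(x)\;\geq\;\hat{V}^*_{0,\mathbb{Q}}(x)-\tau_0(\mathbb{Q})\delta\;\geq\;\hat{\epsilon}-\tau_0(\mathbb{Q})\delta\;=\;\epsilon,\qquad\forall x\in\mathbb{Q}.
\end{equation*}
The standing assumption $\epsilon\geq\tau_0(\mathbb{Q})\delta$ (equivalently $\hat{\epsilon}\geq 2\tau_0(\mathbb{Q})\delta$) only enters to keep the resulting level $\epsilon$ meaningful as a confidence in $[0,1]$; the inequality itself does not require it.

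Finally I would apply Proposition~\ref{finitePCIS1} in the reverse direction, now to the original system $\mathcal{S}$: since $V^*_{0,\mathbb{Q}}(x)\geq\epsilon$ for every $x\in\mathbb{Q}$, statement (ii) of that proposition holds, so statement (i) holds as well, i.e.\ $\mathbb{Q}$ is an $N$-step $\epsilon$-PCIS for $\mathcal{S}$. Because all of the heavy analytic work — the Lipschitz-type estimate on the stochastic kernel, the propagation through the backward recursion, and the construction of the coefficients $\tau_k(\mathbb{Q})$ in (\ref{tauk}) — has already been absorbed into Lemma~\ref{Thedisfinite}, there is no serious obstacle remaining. The only item that needs a moment of care is confirming that $\hat{V}^*_{0,\mathbb{Q}}$ in Lemma~\ref{Thedisfinite} refers to the piecewise-constant extension over $\mathbb{Q}$ (not just over $\hat{\mathbb{Q}}$), which is indeed how it is defined in the discretized DP immediately preceding the lemma.
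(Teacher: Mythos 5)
Your proposal is correct and follows essentially the same route as the paper's own proof: extend the discretized value function piecewise-constantly to all of $\mathbb{Q}$, apply the uniform error bound of Lemma~\ref{Thedisfinite} at $k=0$ via the triangle inequality, and conclude through Proposition~\ref{finitePCIS1}. Your side remark that the condition $\epsilon\geq\tau_0(\mathbb{Q})\delta$ only serves to keep the resulting confidence level nonnegative is also consistent with how the paper uses it.
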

\begin{proof}
According to the construction of the discretized system $\hat{\mathcal{S}}_{\mathbb{Q}}$, we have that $\forall k\in \mathbb{N}_{[0,N]}$, $\forall i\in \mathbb{N}_{[1,m_x]}$ and $\forall x\in \mathbb{Q}_i$, $\hat{V}^*_{k,\mathbb{Q}}(x)=\hat{V}^*_{k,\mathbb{Q}}(q_i)$. Since $\hat{\mathbb{Q}}$ is an $N$-step $\hat{\epsilon}$-PCIS, it follows that $\forall x\in \mathbb{Q}$, $\hat{V}^*_{0,\mathbb{Q}}(x)\geq \hat{\epsilon}$. By Lemma~\ref{Thedisfinite} and triangular inequality, we have
  \begin{eqnarray*}
  V^*_{0,\mathbb{Q}}(x)\geq \hat{V}^*_{0,\mathbb{Q}}(x)- \tau_0(\mathbb{Q})\delta \geq \hat{\epsilon}-\tau_0(\mathbb{Q})\delta, \forall x\in \mathbb{Q}.
  \end{eqnarray*}
  Then, when $\hat{\epsilon}\geq\tau_0(\mathbb{Q})\delta$, we conclude that the set $\mathbb{Q}$ is an $N$-step $\epsilon$-PCIS where $0\leq\epsilon=\hat{\epsilon}-\tau_0(\mathbb{Q})\delta$.
\end{proof}

\begin{remark}
  From Theorem~\ref{finitediffed}, if $0\leq \epsilon< 1$, by choosing a suitable grid size $0<\delta\leq \frac{1-\epsilon}{\tau_0(\mathbb{Q})}$, the problem of computing  an $N$-step $\epsilon$-PCIS within $\mathbb{Q}$ for $\mathcal{S}$ can be transformed into that of computing an approximate $N$-step $\hat{\epsilon}$-PCIS with probability $\hat{\epsilon} \geq\epsilon+\tau_0(\mathbb{Q})\delta$ for $\hat{\mathcal{S}}_{\mathbb{Q}}$.
\end{remark}

% a sufficient condition to guarantee that a set $\mathbb{Q}$ is an $N$-step $\epsilon$-PCIS is that its corresponding discretized set $\hat{\mathbb{Q}}$ is an $N$-step $\hat{\epsilon}$-PCIS with $\hat{\epsilon} \geq\epsilon+\tau_0(\mathbb{Q})\delta$.  Hence

\paragraph{Computation algorithm}
 Assume that a probability level $0\leq \epsilon< 1$ is given. After discretizing the set $\mathbb{Q}$ and the control space $\mathbb{U}$, we modify Algorithm 1 to compute an $N$-step $\epsilon$-PCIS $\tilde{\mathbb{Q}}\subseteq\mathbb{Q}$, as shown in the following.
\begin{algorithm}[H]\label{algorithm2}
\caption{Approximate $N$-step $\epsilon$-PCIS}
\begin{algorithmic}[1]
\State Choose grid size  $0<\delta< \frac{1-\epsilon}{\tau_0(\mathbb{Q})}$, discretize the sets $\mathbb{Q}$ and $\mathbb{U}$, construct an approximate system $\hat{\mathcal{S}}_{\mathbb{Q}}=(\hat{\mathbb{Q}},\hat{\mathbb{U}},\hat{T})$.
\State Initialize $i=0$, $\mathbb{P}_i=\mathbb{Q}$, and $\hat{\mathbb{P}}_i=\hat{\mathbb{Q}}$.
\State Compute $\hat{V}^*_{0,\mathbb{P}_i}(q_j)$, $\forall q_j\in \hat{\mathbb{P}}_i$.
\State Compute $\tau_0(\mathbb{P}_i)$ by (\ref{tauk})  and $\hat{\epsilon}=\epsilon+\tau_0(\mathbb{P}_i)\delta$.
\State Compute the set $\hat{\mathbb{P}}_{i+1}=\mathbb{S}^{*}_{\hat{\epsilon},N}(\hat{\mathbb{P}}_i)$ for $\hat{\mathcal{S}}_{\mathbb{Q}}$ and $\mathbb{P}_{i}=\cup_{q_j\in\hat{\mathbb{P}}_{i}}\mathbb{Q}_j$
\State If $\hat{\mathbb{P}}_{i+1}=\hat{\mathbb{P}}_i$, stop. Else, set $i=i+1$ and go to step 3.
\end{algorithmic}
\end{algorithm}

In Algorithm 2, we first construct an approximate system $\hat{\mathcal{S}}_{\mathbb{Q}}=(\hat{\mathbb{Q}},\hat{\mathbb{U}},\hat{T})$ with grid size  $0<\delta< \frac{1-\epsilon}{\tau_0(\mathbb{Q})}$. Then, following similar steps as in Algorithm 1, we compute the stochastic backward reachable set iteratively for the system  $\hat{\mathcal{S}}_{\mathbb{Q}}$. At each iteration, an LP is solved to obtain the $N$-step invariance  probability. One difference is that the stochastic backward reachable set is computed with respect to $\hat{\epsilon}=\epsilon+\tau_0(\mathbb{P}_i)\delta$ and the updated set for the system $\mathcal{S}$ is the union of the subsets of $\mathbb{Q}$ corresponding  to the stochastic backward reachable set. By Theorem~\ref{finitediffed}, the resulting set by Algorithm 2 is an
$N$-step $\epsilon$-PCIS.

\begin{corollary}\label{finiteThe1}
Let Assumptions~\ref{finiteassum} and \ref{assumt} hold. For continuous state and control spaces, Algorithm~2 converges in a finite number of iterations and generates an
$N$-step $\epsilon$-PCIS. Furthermore, at each iteration, the $N$-step invariance probability $\hat{V}^*_{0,\mathbb{P}_i}(q_j)$, $\forall q_j\in \hat{\mathbb{P}}_i$, can be computed via the LP (\ref{LPfinite}) and the corresponding optimal policy is determined by (\ref{dualLPfinite}).
\end{corollary}
\begin{proof}
By Theorem~\ref{finitediffed} and the Borel measurability of the subsets $\mathbb{Q}_i, \forall i\in \mathbb{N}_{[1,m_x]}$, it follows that the set generated by Algorithm~2 is an  $N$-step $\epsilon$-PCIS.
The remaining part is similar to the proof of Corollary~\ref{finiteThe}.
\end{proof}

\begin{remark}
 When implementing Algorithm 2 to a system with continuous spaces,  it follows from   \cite{Megiddo(1984)} that Algorithm 2 can be implemented in $O(m^2_x(Nm_u+1))$ time, cf. Remark~\ref{Remark:Alg1Complexity}.
\end{remark}

\section{Extension to Infinite-horizon $\epsilon$-PCIS}\label{infinitePCISs}
Now let us extend finite-horizon $\epsilon$-PCISs to infinite-horizon $\epsilon$-PCISs. In this section, we define the infinite-horizon $\epsilon$-PCIS and explore the conditions of its existence. Furthermore, we provide algorithms to compute an infinite-horizon $\epsilon$-PCIS within a given set.

\begin{definition}\label{infinidefPCIS}
  (Infinite-horizon PCIS) Consider a stochastic control system $\mathcal{S}=(\mathbb{X},\mathbb{U},T)$. Given a confidence level $0\leq\epsilon\leq1$, a set $\mathbb{Q} \in \mathcal{B}(\mathbb{X})$ is an infinite-horizon $\epsilon$-PCIS for $\mathcal{S}$ if for any $x\in \mathbb{Q}$, there exists at least one stationary Markov policy $\bm{\mu}\in\mathcal{M}$ such that $p_{\infty,\mathbb{Q}}^{\bm{\mu}}(x)\geq \epsilon$.
\end{definition}

We define the stochastic backward reachable set $\mathbb{S}^{*}_{\epsilon,\infty}(\mathbb{Q})$ by collecting all the states $x\in \mathbb{Q}$ at which the infinite-horizon invariance  probability $p^*_{\infty,\mathbb{Q}}(x)\geq \epsilon$, i.e.,
\begin{eqnarray*}\label{infiniteSBRS}
 &&\hspace{0cm}\mathbb{S}^{*}_{\epsilon,\infty}(\mathbb{Q})=\{x\in\mathbb{Q}\mid \exists \bm{\mu}\in \mathcal{M}, p_{\infty,\mathbb{Q}}^{\bm{\mu}}(x)\geq \epsilon\}\nonumber\\
&&\hspace{1.35cm} =\{x\in\mathbb{Q}\mid \sup_{\bm{\mu}\in \mathcal{M}}p_{\infty,\mathbb{Q}}^{\bm{\mu}}(x)\geq \epsilon\}\nonumber\\
&&\hspace{1.33cm} =\{x\in\mathbb{Q}\mid G^*_{\infty,\mathbb{Q}}(x)\geq \epsilon\}.
 \end{eqnarray*}

For the infinite-horizon case, Lemma~\ref{finiteuniver} and Proposition~\ref{borelSBRS} still hold. That is, the set $\mathbb{S}^{*}_{\epsilon,\infty}(\mathbb{Q})$ is universally measurable and for any $p\in \mathcal{P}(\mathbb{S})$, there exists another Borel-measurable set $\tilde{\mathbb{S}}^{*}_{\epsilon,\infty}(\mathbb{Q})\subseteq \mathbb{Q}$ such that $p(\tilde{\mathbb{S}}^{*}_{\epsilon,\infty}(\mathbb{Q})\bigtriangleup \mathbb{S}^{*}_{\epsilon,\infty}(\mathbb{Q}))=0$.

Under Assumption~\ref{infiniteassum},
by Lemma~\ref{TheAbata2} and the definition of $\mathbb{S}^{*}_{\epsilon,\infty}(\mathbb{Q})$, we can verify whether a set $\mathbb{Q}\in \mathcal{B}(\mathbb{X})$  is an infinite-horizon $\epsilon$-PCIS or not by checking if either $\mathbb{S}^{*}_{\epsilon,\infty}(\mathbb{Q})=\mathbb{Q}$, or  $G_{\infty,\mathbb{Q}}^*(x)\geq \epsilon$,  $\forall x\in\mathbb{Q}$, where  $G_{\infty,\mathbb{Q}}^*(x)$ is defined by (\ref{Ginf1})--(\ref{Ginf}).

\begin{definition}\label{defRCIS}
Consider a stochastic control system $\mathcal{S}=(\mathbb{X},\mathbb{U},T)$. An RCIS $\mathbb{Q} \in \mathcal{B}(\mathbb{X})$ for $\mathcal{S}$ is an $N$-step $\epsilon$-PCIS with $N=1$ and $\epsilon=1$.
\end{definition}

\begin{remark}
Another interpretation of RCIS in Definition~\ref{defRCIS} is that a set $\mathbb{Q} \in \mathcal{B}(\mathbb{X})$ is an RCIS if for any $x\in \mathbb{Q}$, there exists at least one control input $u\in \mathbb{U}$ such that $T(\mathbb{Q}|x,u)=1$.
 It is easy to verify that an RCIS is also an infinite-horizon $\epsilon$-PCIS with $\epsilon=1$. It is called an absorbing set in \cite{Tkachev(2011)} where there is no control input. In the following, we show that the RCIS plays an important role in the existence of infinite-horizon PCIS and provide how to design an algorithm to compute such PCIS based on RCIS.
\end{remark}

\begin{remark}
Note that infinite-horizon $\epsilon$-PCISs are also closed under union, as shown in Proposition~\ref{finiteproperty} when $N$ is replaced by $\infty$.
\end{remark}

\subsection{Existence of infinite-horizon PCIS}
Intuitively, the monotone decrease of $G^*_{\infty,\mathbb{Q}}(x)$ may imply that the value of $G^*_{\infty,\mathbb{Q}}(x)$ is one or zero. However, it is possible to get $0<G^*_{\infty,\mathbb{Q}}(x)< 1$ in some cases (see Examples 1 and 2 in Section \ref{example}).
The following theorem provides necessary conditions and sufficient conditions for the existence of infinite-horizon $\epsilon$-PCIS with $\epsilon>0$.

\begin{theorem}\label{infinitenecessary}
Suppose that Assumption~\ref{infiniteassum} holds and let $0< \epsilon\leq1$ be fixed. Given a nonempty set $\mathbb{Q}$, let $u_x$ be the control input such that (\ref{Ginf}) holds for each $x\in \mathbb{Q}$. The set $\mathbb{Q}$ is an infinite-horizon $\epsilon$-PCIS
\begin{itemize}
  \item[(i)] \emph{only if} there exists an RCIS $\mathbb{Q}_f\subseteq \mathbb{Q}$ such that $\forall x\in \mathbb{Q}\setminus\mathbb{Q}_f$,
      \begin{eqnarray}\label{Eq:Ginfupperbound}
     &&\hspace{-1cm} T(\mathbb{Q}_f|x,u_x)
    +\int_{\mathbb{Q}\setminus\mathbb{Q}_f}T(\mathbb{Q}_f|y,u_{y})T(dy|x,u_x)\nonumber\\
   &&\hspace{4cm} +\frac{\rho^2}{1-\rho}\geq \epsilon,
      \end{eqnarray}
   where $\rho=\sup_{x\in \mathbb{Q}\setminus\mathbb{Q}_f}\int_{\mathbb{Q}\setminus\mathbb{Q}_f}T(dy|x,u_x)$;
  \item[(ii)] \emph{if }there exists an RCIS $\mathbb{Q}_f\subseteq \mathbb{Q}$ such that $\forall x\in \mathbb{Q}\setminus\mathbb{Q}_f$,
      \begin{eqnarray}\label{Eq:Ginflowerbound}
      T(\mathbb{Q}_f|x,u_x)
    +\int_{\mathbb{Q}\setminus\mathbb{Q}_f}T(\mathbb{Q}_f|y,u_{y})T(dy|x,u_x)\geq \epsilon.
      \end{eqnarray}
\end{itemize}
\end{theorem}
\begin{proof}
  See Appendix C.
\end{proof}

\begin{remark}
  The value of $\rho$  is the largest probability that the next state $y$ remains outside the RCIS $\mathbb{Q}_f$ from any $x\in \mathbb{Q}\setminus\mathbb{Q}_f$ under the optimal stationary Markov policy in Lemma~\ref{TheAbata2}. Note that  $\frac{\rho^2}{1-\rho}$  is  the gap between the necessary condition and the sufficient condition. In addition, the second item in (\ref{Eq:Ginfupperbound})--(\ref{Eq:Ginflowerbound}) denotes the probability that the state is steered into the RCIS $\mathbb{Q}_f$ by two transitions from $x\in \mathbb{Q}\setminus\mathbb{Q}_f$ with an intermediate state $y$  outside $\mathbb{Q}_f$.
\end{remark}

\begin{corollary}\label{Cor:infinitenecessary}
Suppose that Assumption~\ref{infiniteassum} holds and let $0< \epsilon\leq1$ be fixed. A nonempty set $\mathbb{Q}$ is an infinite-horizon $\epsilon$-PCIS
\begin{itemize}
  \item[(i)] \emph{only if} there exists an RCIS $\mathbb{Q}_f\subseteq \mathbb{Q}$ such that $\forall x\in \mathbb{Q}\setminus\mathbb{Q}_f$, $T(\mathbb{Q}|x,u)\geq \epsilon$ for some $u\in \mathbb{U}$;
  \item[(ii)] \emph{if }there exists an RCIS $\mathbb{Q}_f\subseteq \mathbb{Q}$ such that $\forall x\in \mathbb{Q}\setminus\mathbb{Q}_f$, $T(\mathbb{Q}_f|x,u)+\epsilon T(\mathbb{Q}\setminus\mathbb{Q}_f|x,u)\geq \epsilon$ for some
$u\in \mathbb{U}$.
\end{itemize}
\end{corollary}
\begin{proof}
    See Appendix D.
\end{proof}

\begin{remark}\label{infinitecorosuffi}
  A nonempty set $\mathbb{Q}$ is an infinite-horizon $\epsilon$-PCIS if there exists an RCIS $\mathbb{Q}_f\subseteq \mathbb{Q}$  such that $\forall x\in \mathbb{Q}\setminus\mathbb{Q}_f$, $T(\mathbb{Q}_f|x,u)\geq \epsilon$ for some $u\in \mathbb{U}$. This implication will facilitate the design of an algorithm for an infinite-horizon $\epsilon$-PCIS, see Algorithm 4.
\end{remark}

\begin{remark}
Considering the similarity between the reliability defined in \cite{Hernandez(2017)} and the infinite-horizon invariance probability in this paper, we can extend the results on infinite-horizon PICSs, including the existence condition above and the computational algorithms in the following, to the reliable control set  in \cite{Hernandez(2016)} to general stochastic systems.
\end{remark}

\subsection{Infinite-horizon $\epsilon$-PCIS computation}
This subsection will address the following problem.
\begin{problem}\label{infiniteproblem}
Given a set $\mathbb{Q}\in \mathcal{B}(\mathbb{X})$ and a prescribed probability $0\leq \epsilon \leq 1$, compute an infinite-horizon $\epsilon$-PCIS $\tilde{\mathbb{Q}}\subseteq\mathbb{Q}$.
\end{problem}

To handle this problem, the key point is to compute the infinite-horizon invariance probability $G^*_{\infty,\mathbb{Q}}$. For discrete spaces, it is shown that computationally tractable MILP can be used to compute the exact value of $G^*_{\infty,\mathbb{Q}}$. In this case, we  can
compute the largest infinite-horizon $\epsilon$-PCIS by computing iteratively the stochastic backward reachable sets until convergence.  For continuous spaces, it is in general computationally intractable to compute $G^*_{\infty,\mathbb{Q}}$ and the discretization method fails to work since the approximation error in (\ref{difV1}) increases with the horizon. In this case, we design another computational algorithm based on the sufficient conditions in Remark~\ref{infinitecorosuffi}.

\subsubsection{Discrete state and control spaces}
 If the state and control
spaces are discrete, we adopt the same assumptions as in Section \ref{finitediscretespace}. We will first show how to compute the exact value of $G^*_{\infty,\mathbb{Q}}$ in (\ref{Ginf1})--(\ref{Ginf})  through an MILP.
Then, we will adapt Algorithm 1 to compute the largest infinite-horizon $\epsilon$-PCIS within a given set.

\paragraph{MILP reformulation}
  Since $0$ is a trivial solution of (\ref{Ginf}), we cannot directly reformulate (\ref{Ginf1})--(\ref{Ginf}) as an LP, which is the traditional way to deal with infinite-horizon stochastic optimal control problems~\cite{Bertsekas(2012)}.

The following lemma provides a computationally tractable MILP reformulation when computing $G^*_{\infty,\mathbb{Q}}$.
\begin{lemma}\label{lemmainf}
 Given any set $\mathbb{Q}\subseteq \mathbb{X}$, the value of $G^*_{\infty,\mathbb{Q}}$ in~(\ref{Ginf}) can be obtained by solving the MILP:
\begin{subequations}\label{MILPinfinite}
  \begin{eqnarray}
  && \hspace{-1.5cm}\max_{g(x),\kappa(x,u)} \ \sum_{x\in \mathbb{Q}}g(x) \\
  && \hspace{-1.5cm}\text{subject to} \ \forall x\in \mathbb{Q},\nonumber \\
  && \hspace{-1.3cm}
      g(x)\geq \sum_{y\in \mathbb{Q}}g(y)T(y|x,u),   \forall u\in \mathbb{U}_x,\label{great}\\
  && \hspace{-1.3cm}
      g(x)\leq \sum_{y\in \mathbb{Q}}g(y)T(y|x,u)+(1-\kappa(x,u))\Delta,   \forall u\in \mathbb{U}_x,\label{less}\\
  && \hspace{-1.3cm} \sum_{u\in\mathbb{U}_x}\kappa(x,u)\geq 1,\label{kappa}\\
  && \hspace{-1.3cm} 0 \leq g(x)\leq1, \kappa(x,u)\in \{0,1\},  \forall u\in \mathbb{U}_x,
   \end{eqnarray}
\end{subequations}
where $\Delta$ is a constant greater than one.
That is, $G^*_{\infty,\mathbb{Q}}(x)=g^*(x)$, $\forall x\in\mathbb{Q}$, where $g^*$ is the optimal solution of the MILP (\ref{MILPinfinite}). The optimal stationary Markov policy is $\bar{\mu}^*_{\mathbb{Q}}(x)=u$  where $u\in \mathbb{U}_x$ such that $\kappa^*(x,u)=1$ and $\kappa^*$ is the optimal solution of the MILP (\ref{MILPinfinite}).
\end{lemma}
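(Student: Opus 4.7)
The plan is to establish the MILP reformulation by proving two things: (i) the pair $(G^{*}_{\infty,\mathbb{Q}}, \kappa^{*})$ built from the optimal stationary policy is feasible for (\ref{MILPinfinite}), and (ii) every feasible $g$ of (\ref{MILPinfinite}) is dominated pointwise by $G^{*}_{\infty,\mathbb{Q}}$. Since the objective is $\sum_{x\in\mathbb{Q}} g(x)$, these two facts together force the optimizer to equal $G^{*}_{\infty,\mathbb{Q}}$ pointwise.

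For feasibility, I would set $\bar{g}(x) = G^{*}_{\infty,\mathbb{Q}}(x)$ and $\bar\kappa(x,u) = 1$ iff $u = \bar\mu^{*}_{\mathbb{Q}}(x)$, with $\bar\mu^{*}_{\mathbb{Q}}$ the optimal stationary policy supplied by Lemma~\ref{TheAbata2}. Constraint~(\ref{great}) is immediate from the fact that $G^{*}_{\infty,\mathbb{Q}}(x) = \max_{u \in \mathbb{U}_x} \sum_{y \in \mathbb{Q}} G^{*}_{\infty,\mathbb{Q}}(y) T(y|x,u)$ by (\ref{Ginf}); constraint~(\ref{less}) is tight at $u = \bar\mu^{*}_{\mathbb{Q}}(x)$ and is trivially satisfied for the remaining $u$ because $\bar g(x) \leq 1 < \Delta$ while the right side is already nonnegative; constraint~(\ref{kappa}) holds by construction; and the box bounds are clear. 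This also settles the policy claim: whenever $\kappa^{*}(x,u) = 1$, combining (\ref{great}) and (\ref{less}) forces $g^{*}(x) = \sum_{y} g^{*}(y) T(y|x,u)$, so $u$ attains the supremum in (\ref{Ginf}) evaluated at $G^{*}_{\infty,\mathbb{Q}}$.

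For the upper bound, I would first show that constraints (\ref{great})--(\ref{kappa}) force any feasible $g$ to be a fixed point of the Bellman operator: (\ref{great}) gives $g(x) \geq \max_{u} \sum_y g(y) T(y|x,u)$, while (\ref{kappa}) guarantees a $u$ with $\kappa(x,u)=1$, and (\ref{less}) at this $u$ yields $g(x) \leq \sum_y g(y) T(y|x,u)$, so equality holds. Then I would induct on the forward recursion (\ref{Ginf1})--(\ref{G1}): the base case $g(x) \leq 1 = G^{*}_{0,\mathbb{Q}}(x)$ follows from the box bound, and monotonicity of the Bellman operator gives $g(x) = \max_u \sum_y g(y) T(y|x,u) \leq \max_u \sum_y G^{*}_{k,\mathbb{Q}}(y) T(y|x,u) = G^{*}_{k+1,\mathbb{Q}}(x)$. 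Passing to the limit (which exists by Lemma~\ref{TheAbata2} under Assumption~\ref{infiniteassum}) delivers $g(x) \leq G^{*}_{\infty,\mathbb{Q}}(x)$ on $\mathbb{Q}$, so the MILP objective is maximized at $G^{*}_{\infty,\mathbb{Q}}$, and pointwise equality of the optimizer follows because the summands are bounded above pointwise by $G^{*}_{\infty,\mathbb{Q}}$ while their sum matches.

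The only subtlety is verifying that $\Delta > 1$ is a sufficient big-M. Since $g(x) \in [0,1]$ and $\sum_y g(y) T(y|x,u) \geq 0$, the slack $g(x) - \sum_y g(y) T(y|x,u)$ never exceeds $1$, so choosing $\Delta > 1$ makes (\ref{less}) inactive precisely when $\kappa(x,u) = 0$ and binding when $\kappa(x,u)=1$. This is the main technical check; the rest of the argument is a straightforward combination of Lemma~\ref{TheAbata2} with monotone iteration of the Bellman operator.
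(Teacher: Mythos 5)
Your proof is correct, and it reaches the same structural conclusion as the paper --- constraints (\ref{great})--(\ref{kappa}) force any feasible $g$ to be a fixed point of the Bellman operator on $\mathbb{Q}$, and the objective then selects the maximum fixed point, which is $G^*_{\infty,\mathbb{Q}}$ --- but you establish the key maximality step by a genuinely different argument. The paper first proves a uniqueness claim (its Lemma~\ref{uniquess}) via the norm chain $\|G^{1,*}-G^{2,*}\|\leq\max_{\bm{\mu}}\|T^{\bm{\mu}}(G^{1,*}-G^{2,*})\|\leq\|G^{1,*}-G^{2,*}\|$ and an equality-case analysis, and then deduces maximality (its Lemma~\ref{maxG}) from uniqueness plus monotone decrease. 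You instead show directly that every feasible $g$ satisfies $g\leq G^*_{k,\mathbb{Q}}$ for all $k$ by induction on the forward recursion (\ref{Ginf1})--(\ref{G1}), starting from $g\leq 1=G^*_{0,\mathbb{Q}}$ and using monotonicity of the Bellman operator, then pass to the limit. Your route is more elementary and arguably more robust: it does not need uniqueness of the fixed point (the paper's contraction-type chain yields only non-strict inequalities, so the contradiction it draws rests on a delicate equality analysis), and it gives domination $g\leq G^*_{\infty,\mathbb{Q}}$ for \emph{every} feasible point, from which pointwise equality of the optimizer follows cleanly because $G^*_{\infty,\mathbb{Q}}$ is itself feasible and the objective sums the coordinates. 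You are also more explicit than the paper about feasibility of $(G^*_{\infty,\mathbb{Q}},\kappa^*)$, the sufficiency of the big-$M$ constant $\Delta>1$, and the extraction of the optimal stationary policy from $\kappa^*$, all of which the paper compresses into one sentence. One minor remark: your limit-passing step cites Lemma~\ref{TheAbata2} under Assumption~\ref{infiniteassum}; in the discrete setting of this lemma the assumption holds automatically since each $\mathbb{U}_x$ is finite, so nothing is lost, but it is worth noting.
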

\begin{proof}
 From the monotone decrease of the sequence $(G^*_{0,\mathbb{Q}},G^*_{1,\mathbb{Q}},\ldots)$ and Lemma~\ref{TheAbata2},  $G^*_{\infty,\mathbb{Q}}$ is the maximum fixed point satisfying (\ref{Ginf}).
  Hence, the equivalent form of $G^*_{\infty,\mathbb{Q}}$ can be written as MILP (\ref{MILPinfinite}), where the constraints (\ref{great})--(\ref{kappa}) guarantee that there exists $u\in \mathbb{U}_x$ such that the equality in (\ref{Ginf}) holds.
\end{proof}

\paragraph{Computational algorithm}
As an adaption of Algorithm 1, the following algorithm  provides a way to compute the largest infinite-horizon $\epsilon$-PCIS within $\mathbb{Q}$.
\begin{algorithm}[H]\label{algorithm3}
\caption{Infinite-horizon $\epsilon$-PCIS}
\begin{algorithmic}[1]
\State Initialize $i=0$ and $\mathbb{P}_i=\mathbb{Q}$.
\State Compute $G^*_{\infty,\mathbb{P}_i}(x)$ for all $x\in \mathbb{P}_i$.
\State Compute the set $\mathbb{P}_{i+1}=\mathbb{S}^{*}_{\epsilon,\infty}(\mathbb{P}_i)$.
\State If $\mathbb{P}_{i+1}=\mathbb{P}_i$, stop. Else, set $i=i+1$ and go to step 2.
\end{algorithmic}
\end{algorithm}

The difference between Algorithms~$1$ and $3$ is that the value of $G^*_{\infty,\mathbb{P}_i}(x)$, instead of $V^*_{0,\mathbb{P}_i}(x)$, $\forall x\in \mathbb{P}_i$,  is computed by (\ref{MILPinfinite}) (replacing $\mathbb{Q}$ with $\mathbb{P}_i$).  Furthermore, the updated set $\mathbb{P}_{i+1}=\mathbb{S}^{*}_{\epsilon,\infty}(\mathbb{P}_i)$, which is a stochastic backward reachable set within $\mathbb{P}_i$ with respect to infinite horizon and a probability level $\epsilon$. The following theorem provides the convergence of $\mathbb{P}_i$ and shows that the resulting set $\tilde{\mathbb{Q}}$ by this algorithm is an infinite-horizon $\epsilon$-PCIS.

\begin{theorem}\label{infinitemaximalPCIS}
    For discrete state and control spaces, Algorithm~3 converges in a finite number of iterations and generates the largest infinite-horizon $\epsilon$-PCIS within $\mathbb{Q}$.  Furthermore, at each iteration, the infinite-horizon invariance probability $G^*_{\infty,\mathbb{P}_i}(x)$, $\forall x\in \mathbb{P}_i$, can be computed via   the MILP (\ref{MILPinfinite}).
\end{theorem}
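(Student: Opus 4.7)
The plan has three parts, matching the three assertions in the theorem: the MILP computation at each iteration, finite-step termination, and maximality of the limit.

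First, the MILP claim is immediate: for any fixed $\mathbb{P}_i\subseteq\mathbb{Q}$, Lemma~\ref{lemmainf} applied with $\mathbb{Q}$ replaced by $\mathbb{P}_i$ directly yields that $G^*_{\infty,\mathbb{P}_i}(x)=g^*(x)$ for all $x\in\mathbb{P}_i$, where $g^*$ is the optimizer of (\ref{MILPinfinite}) built on $\mathbb{P}_i$. So Step 2 of Algorithm 3 is implementable via MILP as claimed.

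Next, I would establish monotonicity and finite-step termination. By construction $\mathbb{P}_{i+1}=\mathbb{S}^*_{\epsilon,\infty}(\mathbb{P}_i)\subseteq\mathbb{P}_i$, so $\{\mathbb{P}_i\}$ is a nonincreasing sequence of subsets of the finite set $\mathbb{Q}$. Since $\mathbb{Q}$ is finite (discrete state space), the sequence must stabilize, and in fact after at most $|\mathbb{Q}|$ steps either $\mathbb{P}_{i+1}=\mathbb{P}_i$ or $|\mathbb{P}_{i+1}|\le|\mathbb{P}_i|-1$, giving finite-step termination. When the stopping condition $\mathbb{P}_{i+1}=\mathbb{P}_i$ triggers, we have $\mathbb{S}^*_{\epsilon,\infty}(\mathbb{P}_i)=\mathbb{P}_i$, and Proposition~\ref{infinitePCIS1}(iii)$\Rightarrow$(i) certifies that $\mathbb{P}_i$ is an infinite-horizon $\epsilon$-PCIS.

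For maximality, let $\tilde{\mathbb{Q}}\subseteq\mathbb{Q}$ be any infinite-horizon $\epsilon$-PCIS and let $\mathbb{P}^\star$ denote the terminal set returned by Algorithm~3. I would show $\tilde{\mathbb{Q}}\subseteq\mathbb{P}_i$ for every $i$ by induction. The base case is $\tilde{\mathbb{Q}}\subseteq\mathbb{Q}=\mathbb{P}_0$. For the inductive step, suppose $\tilde{\mathbb{Q}}\subseteq\mathbb{P}_i$. By the infinite-horizon analogue of Property~\ref{finiteproperty1} (noted in the remark right after Definition~\ref{defRCIS}), trajectories that stay in the smaller set $\tilde{\mathbb{Q}}$ a fortiori stay in $\mathbb{P}_i$, so for every $x\in\tilde{\mathbb{Q}}$,
\begin{equation*}
G^*_{\infty,\mathbb{P}_i}(x)=\sup_{\bm{\mu}\in\mathcal{M}}p^{\bm{\mu}}_{\infty,\mathbb{P}_i}(x)\ \ge\ \sup_{\bm{\mu}\in\mathcal{M}}p^{\bm{\mu}}_{\infty,\tilde{\mathbb{Q}}}(x)=G^*_{\infty,\tilde{\mathbb{Q}}}(x)\ge \epsilon,
\end{equation*}
where the last inequality uses that $\tilde{\mathbb{Q}}$ is an infinite-horizon $\epsilon$-PCIS (Proposition~\ref{infinitePCIS1}(i)$\Rightarrow$(ii)). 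Therefore $x\in\mathbb{S}^*_{\epsilon,\infty}(\mathbb{P}_i)=\mathbb{P}_{i+1}$, closing the induction. Passing to the limit gives $\tilde{\mathbb{Q}}\subseteq\mathbb{P}^\star$, and since $\mathbb{P}^\star$ is itself an infinite-horizon $\epsilon$-PCIS within $\mathbb{Q}$, it is the largest such set.

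The step I expect to require the most care is the monotonicity used in the maximality argument, because it relies on comparing supremums of stationary-Markov-policy value functions over two different sets $\tilde{\mathbb{Q}}\subseteq\mathbb{P}_i$; this needs the fact that the event ``$x_k\in\tilde{\mathbb{Q}}$ for all $k$'' is contained in ``$x_k\in\mathbb{P}_i$ for all $k$'' under every common policy, which then passes to the supremum. Everything else (termination, MILP, equivalence with the fixed-point characterization) is routine given the preceding lemmas.
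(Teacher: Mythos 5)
Your proposal is correct and follows essentially the same route as the paper: finite-step termination from the nonincreasing iterates together with the finite cardinality of $\mathbb{Q}$, the MILP claim from Lemma~\ref{lemmainf} applied to each $\mathbb{P}_i$, and maximality of the terminal set certified via Proposition~\ref{infinitePCIS1}. The only difference is one of detail: where the paper disposes of maximality by appealing to ``fixed-point theory'' as in Theorem~\ref{finitemaxPCIS1}, you make that step explicit through the induction $\tilde{\mathbb{Q}}\subseteq\mathbb{P}_i\Rightarrow\tilde{\mathbb{Q}}\subseteq\mathbb{P}_{i+1}$ using the infinite-horizon analogue of Property~\ref{finiteproperty1}, which is precisely the content of that appeal.
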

\begin{proof}
  The finite-step convergence of Algorithm~3 follows from the finite cardinality of the set $\mathbb{Q}$. Similar to Theorem~\ref{finitemaxPCIS1}, the generated infinite-horizon $\epsilon$-PCIS is the largest one within $\mathbb{Q}$. The MILP reformulation refers to Lemma~\ref{lemmainf}.
\end{proof}

\begin{remark}
  When implementing Algorithm 3 to a system with discrete spaces, the maximal iteration number is $|\mathbb{Q}|$. An MILP is used to compute the value of $G^*_{\infty,\mathbb{P}_i}(x)$, $\forall x\in \mathbb{P}_i$, at each iteration. The number of  real-valued decision values is at most $|\mathbb{Q}|$, the number of  binary decision values is at most $|\mathbb{Q}||\mathbb{U}|$, and the number of  constraints is at most $|\mathbb{Q}|(2|\mathbb{U}|+3)$. In general, MILPs are NP-hard and can be solved by cutting plane algorithm or branch-and-bound algorithm~\cite{Papadimitriou(1998)}. Some advanced softwares have been developed to solve large MILPs efficiently~\cite{Linderoth(2010),Bonami(2012)}.
\end{remark}

\subsubsection{Continuous state and control spaces}
If the state and control spaces are continuous, it is computationally intractable to compute the exact value of infinite-horizon invarinace probability $G^*_{\infty,\mathbb{Q}}(x)$.
Based on Remark~\ref{infinitecorosuffi}, this subsection provides another way to compute an infinite-horizon $\epsilon$-PCIS within a given set $\mathbb{Q}$.

Different from Algorithm 3, which computes iteratively the stochastic backward reachable sets, the following algorithm generates an infinite-horizon $\epsilon$-PCIS by computing a backward stochastic reachable set from the RCIS $\mathbb{Q}_f$ contained in $\mathbb{Q}$.
\begin{algorithm}[H]\label{algorithm4}
\caption{Infinite-horizon $\epsilon$-PCIS}
\begin{algorithmic}[1]
\State Compute the RCIS within $\mathbb{Q}$, denoted by $\mathbb{Q}_f$.
\State Compute the stochastic backward reachable set from $\mathbb{Q}_f$, i.e.,
$\tilde{\mathbb{Q}}=\{x\in \mathbb{Q} \mid \exists u\in \mathbb{U}, \int_{\mathbb{Q}_f}T(dy|x,u)\geq \epsilon \}$.
\end{algorithmic}
\end{algorithm}

The first step in Algorithm 4 is the computation of RCIS within a given set, which is a well-studied topic in the literature \cite{Rakovic(2005),Rungger(2017),Gilbert(1991)}. Then, based on RCIS $\mathbb{Q}_f$ within $\mathbb{Q}$, the stochastic backward reachable set
\begin{eqnarray*}
\tilde{\mathbb{Q}}=\{x\in \mathbb{Q} \mid \exists u\in \mathbb{U}, \int_{\mathbb{Q}_f}T(dy|x,u)\geq \epsilon \}
\end{eqnarray*}
 is an infinite-horizon $\epsilon$-PCIS within $\mathbb{Q}$. In comparision with Algorithms~1--3, the iteration is avoided in Algorithm~4, which only needs two steps.

\begin{remark}
Note that the resulting set from Algorithm~4 is in general not the largest infinite-horizon $\epsilon$-PCIS within the given set $\mathbb{Q}$. It is possible to obtain a larger infinite-horizon $\epsilon$-PCIS if we can  reformulate the existence conditions in Theorem~\ref{infinitenecessary} and Corollary~\ref{Cor:infinitenecessary} in a recursive form and thereby modify Algorithm 4 to be a recursive algorithm.
\end{remark}

\begin{remark}
The complexity of Algorithm 4 depends on the computation of the RCIS \cite{Blanchini(2007),Rakovic(2005),Rungger(2017),Gilbert(1991)}, and the computation of the backward stochastic reachable set. The later can be reformulated as a chance-constrained problem and then approximately solved.
      Some results on computation of the backward stochastic reachable set have been reported in \cite{Prandini(2006)}.
The first example in Section \ref{example} will show how to compute the backward stochastic reachable set.
\end{remark}

\begin{figure}[t]
\centering
	\subfigure{
	\includegraphics[width=0.32\textwidth]{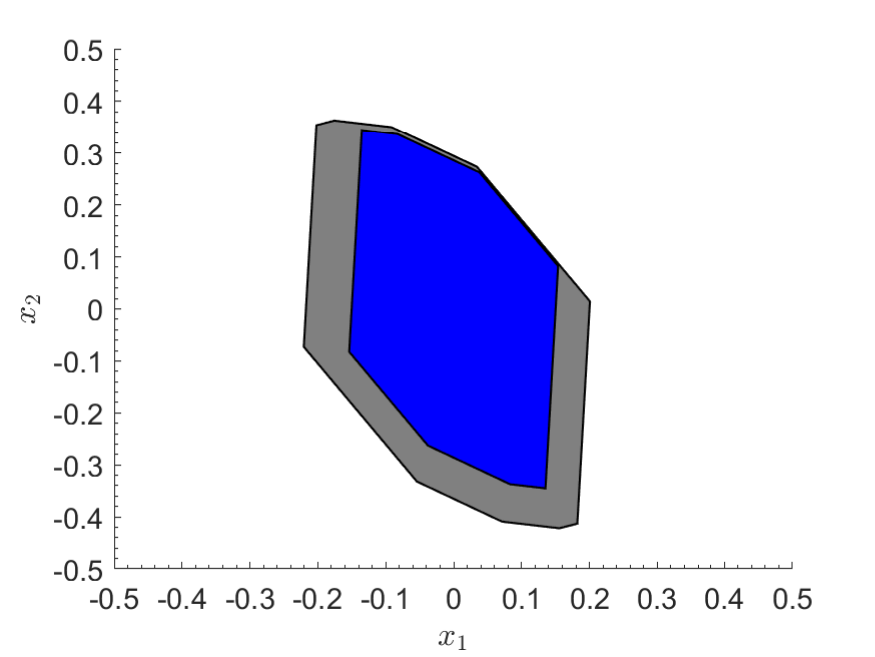}}
	\caption{\footnotesize Computations of the largest RCIS (blue) and an infinite-horizon $\epsilon$-PCIS with $\epsilon=0.80$ (gray) by Algorithm 4 for Example 1.}
	\label{mpc1}
\end{figure}

\begin{figure*}
%\hspace{-1.2cm}
\centering
\subfigure{
	\includegraphics[width=0.8\textwidth]{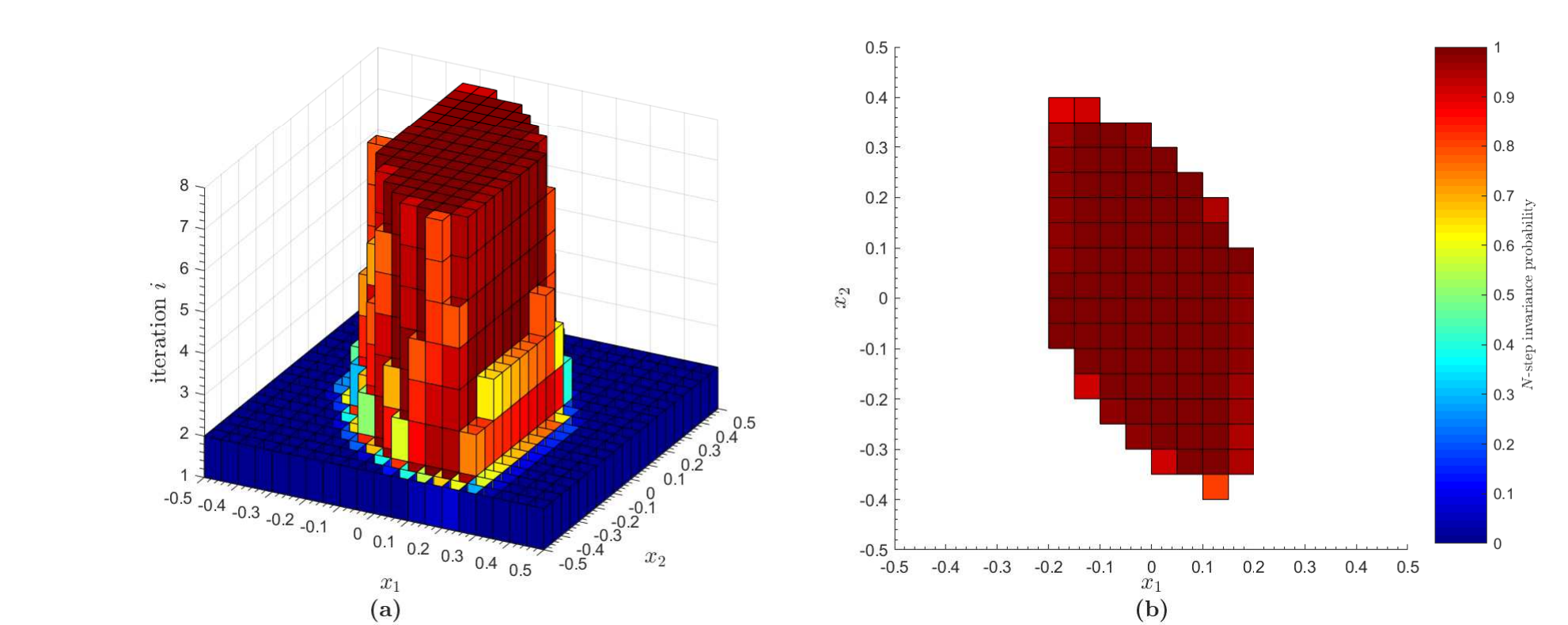}}
	\caption{\footnotesize  Computation of $N$-step $\epsilon$-PCIS with $N=5$ and $\epsilon=0.80$ for Example 1: (a) The sets $\mathbb{P}_i$ and the corresponding $N$-step invariance probability in Algorithm 2. (b) The $N$-step $\epsilon$-PCIS $\tilde{\mathbb{Q}}$.}
\label{mpc2}
\end{figure*}

\section{Examples}\label{example}
In this section, two examples are provided to illustrate the effectiveness of the proposed theoretical results. The first one is concerned with comparison between PCIS and RCIS. Then we consider an application to motion planning of a mobile robot in a partitioned space with obstacles.

\subsection{Example 1: Comparison between PCIS and RCIS}
Consider the following example from \cite{Kouvaritakis(2010)}:
\begin{equation*}
x_{k+1}=Ax_k+Bu_k+w_k,
\end{equation*}
where $A=\left [\begin{array}{ccccc}
1.6 & 1.1\\
-0.7 & 1.2
\end{array}\right]$ and $B=\left [\begin{array}{ccccc}
1\\
1
\end{array}\right]$.
The control input is constrained by $ |u_k|\leq 0.25$. We consider $w_k$ to be either  non-stochastic or stochastic when computing RCIS and PCIS, respectively. The region of interest
is~$\mathbb{Q}=\{x\in\mathbb{R}^2\mid \|x\|_{\infty}\leq 0.5\}$. We will compare the largest RCIS and PCIS within~$\mathbb{Q}$.

To derive an RCIS for this system, we assume the disturbance belongs to  the compact set $\mathbb{W}=\{w\in\mathbb{R}^2\mid \|w\|_{\infty}\leq 0.05\}$. By using the methods in \cite{Bertsekas(1972),Gilbert(1991)}, we obtain the  largest RCIS, which is the blue region shown in Fig.~\ref{mpc1}. The gray region is an infinite-horizon $\epsilon$-PCIS described in the end of this example.

When computing a finite-horzion PCIS, assume that elements of $w_k$ are  i.i.d. Gaussian random variables with zero mean and variance $\sigma^2=1/30^2$.
This system can be represented as a triple $\mathcal{S}=\{\mathbb{X},\mathbb{U},T\}$:
\begin{eqnarray*}
\begin{cases}
  \mathbb{X}=\mathbb{R}^2,\\
  \mathbb{U}=\{u\in \mathbb{R}\mid  |u|\leq 0.25\}, \\
  t(x_{k+1}|x_k,u_k)=\psi(\Lambda^{-1}(x_{k+1}-Ax_k-Bu_k)),
\end{cases}
\end{eqnarray*}
where $\psi(\cdot)$ is the density function of the standard normal distribution and $\Lambda=\rm{diag}\{\sigma,\sigma\}$. In this case, since the  Lipschitz constant $L$ in Assumption~\ref{assumt} is small, we ignore the approximation error $\tau_0$ in (\ref{tauk}).
We discretize the continuous spaces and implement Algorithm $2$ to compute the $N$-step $\epsilon$-PCIS $\tilde{\mathbb{Q}}$. First consider  $N=5$ and $\epsilon=0.80$. Fig.~\ref{mpc2}(a) shows the evolution of the set $\mathbb{P}_i$ in Algorithm 2. The color indicates the corresponding $N$-step invariance probability $p^{*}_{N,\mathbb{P}_i}(x)$ and the $z$-axes the iteration index $i$. The algorithm converges in $8$ steps. Fig.~\ref{mpc2}(b) shows $\mathbb{P}_8$, which corresponds to the $N$-step $\epsilon$-PCIS $\tilde{\mathbb{Q}}$ for  $N=5$ and $\epsilon=0.80$.
%Figs.~~\ref{mpc3} and \ref{mpc4} show the $N$-step $\epsilon$-PCISs for $N=1,3$,  $\epsilon=0.80$ and $N=5$,  $\epsilon=0.70,0.90$, respectively. Note that in all cases the probability density is concentrated in the interior of the sets. Note also that the numerical results indicate that the sets are nonincreasing in both $N$ and $\epsilon$, as expected from the analysis in Section~\ref{finitePCISs}.

When computing an infinite-horizon PCIS, we choose the same bound on the disturbance as for the RCIS. The elements of $w_k$ are  truncated i.i.d. Gaussian random variables with zero mean and variance $\sigma^2=1/30^2$. Denote the largest RCIS computed above by $\mathbb{Q}_f=\{x\in \mathbb{R}^2\mid Hx\leq h\}$, where the matrix $H$ and the vector $h$ are with appropriate dimensions. As stated in Algorithm~$4$,  the one-step stochastic backward reachable set from the RCIS associated with probability $0.80$ is an infinite-horizon $\epsilon$-PCIS with $\epsilon=0.80$, i.e.,
  \begin{eqnarray*}
\tilde{\mathbb{Q}}=\{x\in \mathbb{Q} \mid \exists u\in \mathbb{U}, {\rm{Pr}}\{H(Ax+Bu+w)\leq h\}\geq 0.80 \}.
\end{eqnarray*}
This set can be represented as
  \begin{eqnarray*}
\tilde{\mathbb{Q}}=\{x\in \mathbb{Q} \mid \exists u\in \mathbb{U},H(Ax+Bu)+h'\leq h \},
\end{eqnarray*}
where $h'$ is the optimal solution of the chance constrained program
\begin{eqnarray*}
&&\hspace{0.2cm}\min \sum_{j}h'_j \\
&&\text{subject to} \ {\rm{Pr}}\{Hw\leq h'\}=0.8.
\end{eqnarray*}
This program can be numerically solved  by using the methods in \cite{Lorenzen(2017),Prekopa(2013)}. The resulting infinite-horizon $\epsilon$-PCIS with $\epsilon=0.80$ is the gray region shown in Fig.~\ref{mpc1}. This region is obviously a superset of the RCIS in blue.

%
%\begin{figure}
%\hspace{-1.2cm}
%\subfigure{
%	\includegraphics[width=0.55\textwidth]{PCISN2d.pdf}}
%\caption{\footnotesize $N$-step $\epsilon$-PCISs for Example 1 with $\epsilon=0.80$ but different values of $N$: (a) $N=1$. (b) $N=3$.}
%\label{mpc3}
%\end{figure}
%
%
%\begin{figure}
%\hspace{-1.2cm}
%\subfigure{
%	\includegraphics[width=0.55\textwidth]{PCISpro2d.pdf}}
%\caption{\footnotesize $N$-step $\epsilon$-PCISs for Example 1 with $N=5$ but different values of $\epsilon$: (a) $\epsilon=0.70$. (b) $\epsilon=0.90$.}
%\label{mpc4}
%\end{figure}

\begin{figure}[t]
\centering
	\subfigure{
	\includegraphics[width=0.35\textwidth]{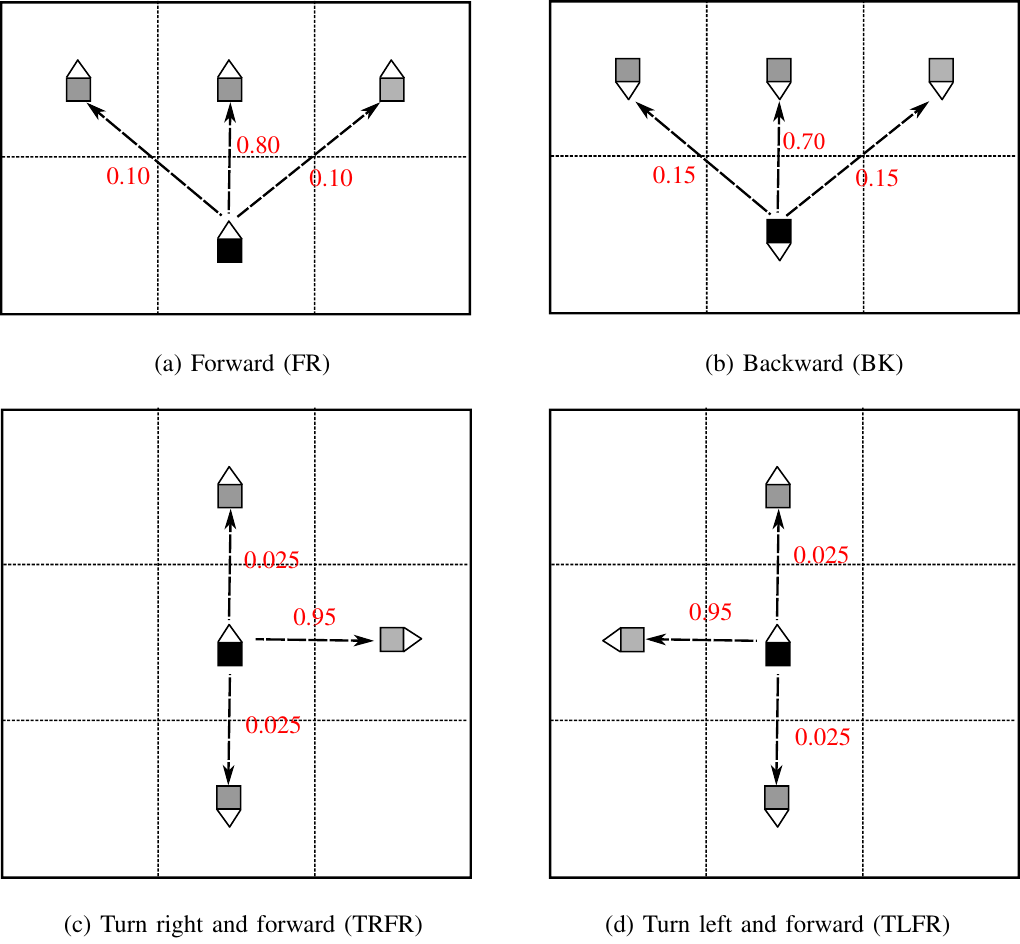}}
	\caption{\footnotesize Transition probability under actions for Example 2.}
\label{transitionpro}
\end{figure}

\begin{figure}[t]
\centering
	\subfigure{
	\includegraphics[width=0.35\textwidth]{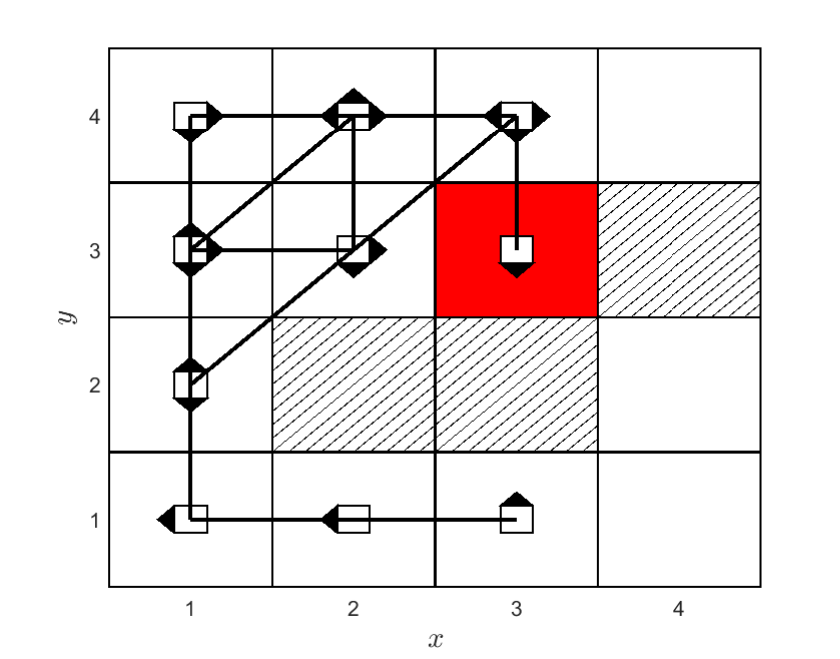}}
	\caption{\footnotesize One simulated state trajectory  with indication of the robot orientation starting from $(3,1,\mathcal{N})$ and ending at $(3,4,\mathcal{S})$ in Example 2.}
	\label{motion2}  % of $30$ time steps
\end{figure}

\begin{figure*}
\hspace{-1.5cm}
	\subfigure{
	\includegraphics[width=1\textwidth]{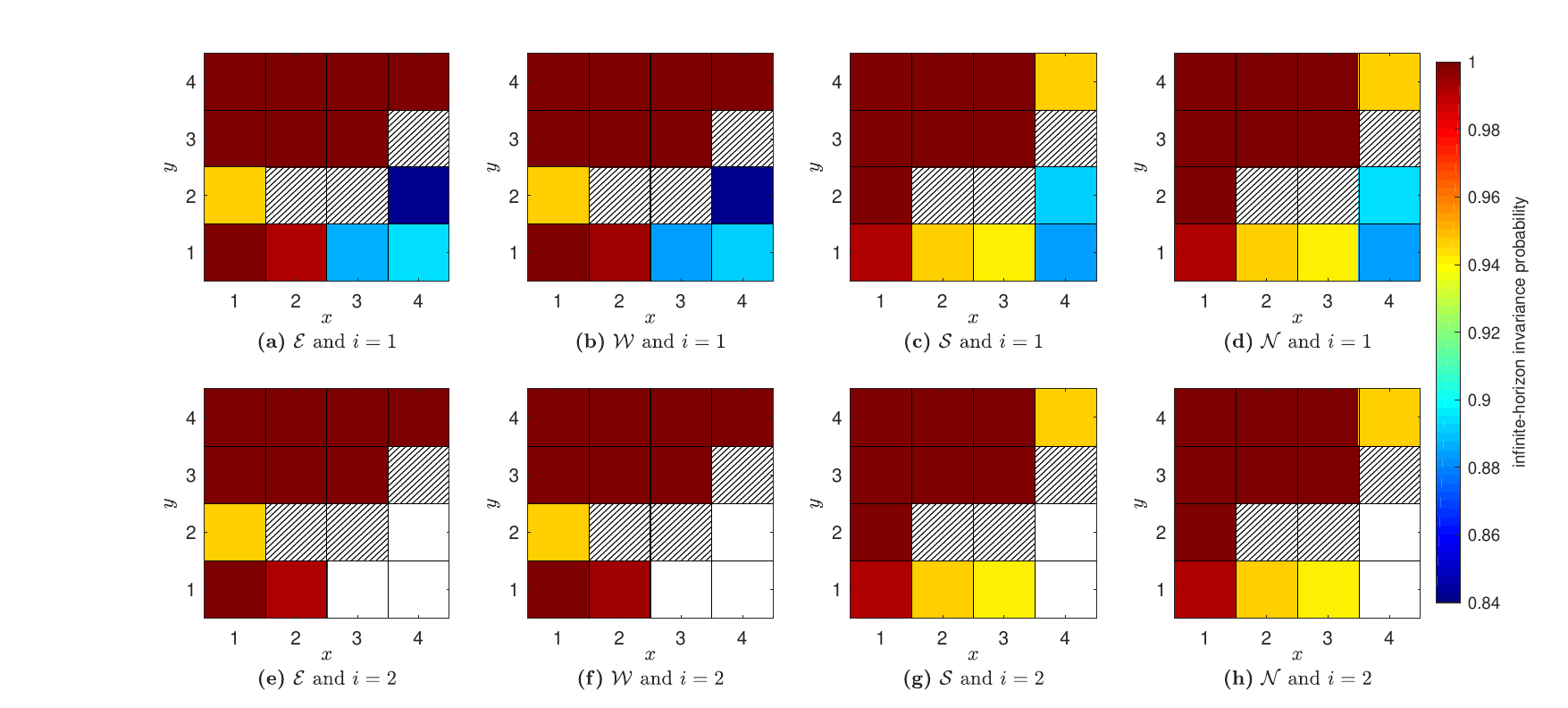}}
	\caption{\footnotesize The sets $\mathbb{P}_i$ and the corresponding infinite-horizon invariance probability in Example 2 when computing the largest infinite-horizon $\epsilon$-PCIS with $\epsilon=0.90$ by Algorithm 3.}
\label{motion}
\end{figure*}

\subsection{Example 2: Motion planning}
The motion planning example in \cite{GuoM(2018)} is adapted to seek an infinite-horizon PCIS within the workspace for a mobile robot.
The state of the robot is abstracted by its cell coordinate, i.e., $(p_x,p_y)\in\{1,2,3,4\}^2$, and its four possible orientations $\{\mathcal{E},\mathcal{W},\mathcal{S},\mathcal{N}\}$.  Due to the actuation noise and drifting, the robot motion is stochastic. Here, we restrict the action space to be $\{\rm{FR},\rm{BK},\rm{TRFR},\rm{TLFR}\}$, under which the possible transitions are shown in Fig.~\ref{transitionpro}. Specifically, action ``$\rm{FR}$" means driving forward for $1$ unit. As illustrated in the figure, the probability for that is $0.80$. The probability of drifting forward to the left or the right by $1$ unit is $0.10$. Action ``$\rm{BK}$" can be similarly defined. Action ``$\rm{TRFR}$" means turning right $\pi/2$ and driving forward for $1$ unit, of which the probability is $0.95$. The probability of driving forward for $1$ unit without turning right is $0.025$ and the probability of turning right for $\pi$ and  driving forward for $1$ unit is  $0.025$. Similarly, we can define the action ``$\rm{TLFR}$".

Consider the partitioned workspace shown in Fig.~\ref{motion2}, where the shadowed cells are occupied by obstacles and the red cell is an absorbing region, i.e., when the robot enters in this region it  will stay there forever.
We construct an MDP with $64$ states and $4$ actions. The transition relation and probability can be defined based on the above description. We compute the largest infinite-horizon $\epsilon$-PCIS with $\epsilon=0.90$ within the safe state space, i.e., the remaining of the state space by excluding the states associated with the obstacles.

By implementing Algorithm~$3$, the computed sets $\mathbb{P}_i$ and the corresponding infinite-horizon invariance probability $p^{*}_{\infty,\mathbb{P}_i}(x)$ are shown in Fig.~\ref{motion}, of which each subfigure corresponds to one orientation in $\{\mathcal{E},\mathcal{W},\mathcal{S},\mathcal{N}\}$. The first row of Fig.~\ref{motion} shows the  results after the first iteration, where we  can see that the infinite-horizon invariance probability $p^{*}_{\infty,\mathbb{P}_i}(x)$ at  $x=(4,2,\mathcal{E})$ and $x=(4,2,\mathcal{W})$ is less than $\epsilon=0.90$.
Algorithm~$3$ converges in $2$ steps and generates the largest infinite-horizon $\epsilon$-PCIS $\tilde{\mathbb{Q}}$ with $\epsilon=0.90$ shown in Fig.~\ref{motion}(e)--\ref{motion}(h). This invariant set provides a region where the admissible action can drive the robot without colliding with the obstacles with  probability $0.90$. By implementing the optimal policy obtained in Lemma~\ref{lemmainf}, we run a state trajectory starting from $(3,1,\mathcal{N})$ as shown in Fig.~\ref{motion2}. We can see that this trajectory is collision-free and finally ends at the absorbing region $(3,3,\mathcal{S})$.

\section{Conclusion}\label{conclusion}
We investigated the extension of set invariance in a stochastic sense for control systems. We proposed  finite- and infinite-horizon $\epsilon$-PCISs, and provided some fundamental properties. We designed
iterative algorithms to compute the PCIS within a given set. For systems
with discrete state and control spaces,  finite- and infinite-horizon $\epsilon$-PCISs can be computed by solving an
LP and an MILP at each iteration, respectively. We proved that the iterative algorithms were
computationally tractable and can be terminated in a finite number
of steps. For systems with continuous state and
control spaces, we established the approximation of
stochastic control systems and proved its convergence when computing finite-horizon $\epsilon$-PCIS.  In addition, thanks to the sufficient conditions for the existence of infinite-horizon $\epsilon$-PCIS, we can compute an infinite-horizon $\epsilon$-PCIS  by the stochastic backward reachable set from the RCIS contained in it. Numerical
examples were given to illustrate the theoretical results.

One future direction is to apply the PCISs to safety-critical control and stochastic predictive control. In particular, how to characterize  stability  using PCISs is an important problem to consider. Another interesting  future extension of PCISs is  to study  reliability and mean-time-to-failure for general stochastic systems.

\section*{Acknowledgment}
The authors are grateful to Prof. Alessandro Abate for helpful discussions and feedback and to anonymous reviewers for their constructive comments.

\section*{Appendix A. Proof of Lemma \ref{finiteuniver}}
Define the functions $J^*_{k,\mathbb{Q}}: \mathbb{X}\rightarrow \mathbb{R}$, $k\in\mathbb{N}_{[0,N]}$, as
\begin{eqnarray*}
J^*_{k,\mathbb{Q}}(x)=-V^*_{N-k,\mathbb{Q}}(x), \forall x\in \mathbb{X}.
\end{eqnarray*}
As shown in \cite{Abate(2007)}, the function $J^*_{N,\mathbb{Q}}$ is lower-semianalytic for any $\mathbb{Q}\in \mathcal{B}(\mathbb{X})$. From Definitions 7.20 and 7.21 in \cite{Bertsekas(2004)}, we have that the function $J^*_{N,\mathbb{Q}}$ is also analytically measurable and thus is universally measurable for any $\mathbb{Q}\in \mathcal{B}(\mathbb{X})$. According to the definition of universal measurability, the set $ J^{*,-1}_{N,\mathbb{Q}}(\mathbb{B})=\{x\in \mathbb{X}\mid J^*_{k,\mathbb{Q}}(x)\in \mathbb{B}\}$ for $\mathbb{B}\in \mathcal{B}(\mathbb{R})$ is universally measurable.

Recall the definition of the stochastic backward reachable set $\mathbb{S}^{*}_{\epsilon,N}(\mathbb{Q})$, we have that \begin{eqnarray*}
 &&\hspace{0cm}\mathbb{S}^{*}_{\epsilon,N}(\mathbb{Q})=\{x\in\mathbb{Q}\mid V^*_{0,\mathbb{Q}}(x)\geq \epsilon\}\\
&&\hspace{1.22cm} =\{x\in\mathbb{Q}\mid -1 \leq J^*_{N,\mathbb{Q}}(x)\leq -\epsilon\}\\
&&\hspace{1.22cm} =J^{*,-1}_{N,\mathbb{Q}}(\mathbb{B})
 \end{eqnarray*}
where $\mathbb{B}=[-1,-\epsilon]\in \mathcal{B}(\mathbb{R})$.
Thus, the set $\mathbb{S}^{*}_{\epsilon,N}(\mathbb{Q})$ is universally measurable for any $\mathbb{Q}\in \mathcal{B}(\mathbb{X})$.
\section*{Appendix B. Proof of Lemma \ref{Thedisfinite}}\label{AppB}
Before proving Lemma \ref{Thedisfinite}, we need two auxiliary lemmas.
Lemma \ref{LipschitzV} shows that the value functions in (\ref{Vk}) are Lipschitz continuous. It is adapted from Theorem~8 in  \cite{Abate(2007)}.   Lemma \ref{Lipschitzt} shows that the difference between the approximate density function and the original density function is bounded.

\begin{lemma}\label{LipschitzV}
  Under Assumptions~\ref{finiteassum} and \ref{assumt}, for any $x,x'\in \mathbb{Q}$, the value functions $V^*_{k,\mathbb{Q}}$ in (\ref{Vk}) satisfy
  \begin{eqnarray}\label{Wineq}
  |V^*_{k,\mathbb{Q}}(x)-V^*_{k,\mathbb{Q}}(x')|\leq \phi(\mathbb{Q})L\|x-x'\|, \forall k\in \mathbb{N}_{[0,N]}.
  \end{eqnarray}
\end{lemma}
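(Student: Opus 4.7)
The plan is to prove \eqref{Wineq} by a one-shot computation for each $k \in \mathbb{N}_{[0,N-1]}$, using the fact that every $V^*_{k+1,\mathbb{Q}}$ takes values in $[0,1]$ together with the uniform Lipschitz bound on the density from Assumption~\ref{assumt}. The case $k = N$ is trivial since $V^*_{N,\mathbb{Q}} \equiv 1$ on $\mathbb{Q}$, so both sides of \eqref{Wineq} vanish. For the remaining cases, since $V^*_{k+1,\mathbb{Q}}(y) = p^*_{N-k-1,\mathbb{Q}}(y) \in [0,1]$ for every $y$, we have a uniform $L^\infty$ bound on the integrand which makes the total-variation style estimate below clean.

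First, I would fix $x,x' \in \mathbb{Q}$. On $\mathbb{Q}$ the indicator $\mathbbm{1}_{\mathbb{Q}}(\cdot)$ equals $1$, so by \eqref{Vk} we can drop it and write
\begin{equation*}
V^*_{k,\mathbb{Q}}(x) = \sup_{u\in\mathbb{U}} \int_{\mathbb{Q}} V^*_{k+1,\mathbb{Q}}(y)\, t(y\mid x,u)\,dy,
\end{equation*}
and similarly for $x'$. I would then invoke the elementary inequality $|\sup_u f(u,x) - \sup_u f(u,x')| \le \sup_u |f(u,x) - f(u,x')|$, so that
\begin{equation*}
|V^*_{k,\mathbb{Q}}(x) - V^*_{k,\mathbb{Q}}(x')| \;\le\; \sup_{u\in\mathbb{U}} \int_{\mathbb{Q}} V^*_{k+1,\mathbb{Q}}(y)\,\bigl|t(y\mid x,u) - t(y\mid x',u)\bigr|\,dy.
\end{equation*}

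Next, I would apply Assumption~\ref{assumt} with the same control $u = u'$ to obtain the pointwise bound $|t(y\mid x,u) - t(y\mid x',u)| \le L\|x-x'\|$ for all $y \in \mathbb{Q}$, and use $V^*_{k+1,\mathbb{Q}}(y) \le 1$. Pushing both bounds into the integral yields
\begin{equation*}
|V^*_{k,\mathbb{Q}}(x) - V^*_{k,\mathbb{Q}}(x')| \;\le\; L\|x-x'\| \int_{\mathbb{Q}} 1 \, dy \;=\; \phi(\mathbb{Q})\,L\,\|x-x'\|,
\end{equation*}
which is exactly \eqref{Wineq}.

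There is no real obstacle here, so the main thing to get right is the uniformity of the estimate in $u$ and the fact that no induction on $k$ is needed because the $L^\infty$ bound $V^*_{k+1,\mathbb{Q}}\le 1$ is available for free; any attempt to inductively propagate the Lipschitz constant would give a factor growing with the horizon, which is weaker than what is claimed. I would also briefly note that the claim holds for $x, x' \in \mathbb{Q}$ only (the statement is restricted to $\mathbb{Q}$), so the indicator cancellation step is always legitimate, and that measurability of the integrals is automatic from the measurability of $V^*_{k+1,\mathbb{Q}}$ and of $t(\cdot\mid\cdot,\cdot)$ guaranteed earlier.
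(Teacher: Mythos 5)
Your proof is correct and follows essentially the same route as the paper's: handle $k=N$ trivially, then for each $k<N$ use the $\sup$-difference inequality, the bound $V^*_{k+1,\mathbb{Q}}\le 1$, and the pointwise Lipschitz estimate from Assumption~\ref{assumt} integrated over $\mathbb{Q}$ to obtain $\phi(\mathbb{Q})L\|x-x'\|$ with no induction. Your added remark about why inductively propagating the Lipschitz constant would be strictly weaker is a nice clarification but the argument is otherwise identical.
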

\begin{proof}
Similar to Theorem~8 in  \cite{Abate(2007)}.
\end{proof}

\begin{lemma}\label{Lipschitzt}
  Under Assumptions \ref{assumt}, for all $y\in \mathbb{Q}$ and $q_i\in\hat{\mathbb{Q}}$,
  \begin{eqnarray*}\label{dift}
\int_{\mathbb{Q}}|\hat{t}(y|q_i,\hat{u})-t(y|q_i,\hat{u})|dy\leq 2\phi(\mathbb{Q})L\delta, \forall \hat{u}\in \hat{\mathbb{U}}.
\end{eqnarray*}
\end{lemma}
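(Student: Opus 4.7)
The plan is to bound $\int_{\mathbb{Q}} |\hat{t}(y|q_i,\hat{u}) - t(y|q_i,\hat{u})|\,dy$ by splitting according to the two cases in the definition of $\hat{t}$, i.e., depending on whether $Z := \int_{\mathbb{Q}} t(s_z|q_i,\hat{u})\,dz$ is less than $1$ or at least $1$. In both cases the workhorse estimate is the same: for $y \in \mathbb{Q}_j$ one has $\|y - s_y\| = \|y - q_j\| \leq D_x \leq \delta$ by Assumption \ref{Dissize}, so together with the Lipschitz control from Assumption \ref{assumt} one obtains $|t(s_y|q_i,\hat{u}) - t(y|q_i,\hat{u})| \leq L\delta$ pointwise in $y \in \mathbb{Q}$, and hence
\begin{align*}
\int_{\mathbb{Q}} |t(s_y|q_i,\hat{u}) - t(y|q_i,\hat{u})|\,dy \leq L\phi(\mathbb{Q})\delta.
\end{align*}

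First I would treat the unnormalized case $Z < 1$, where $\hat{t}(y|q_i,\hat{u}) = t(s_y|q_i,\hat{u})$. The inequality above immediately yields $\int_{\mathbb{Q}} |\hat{t} - t|\,dy \leq L\phi(\mathbb{Q})\delta$, which is already within the claimed $2L\phi(\mathbb{Q})\delta$.

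The harder case is $Z \geq 1$, where $\hat{t}(y|q_i,\hat{u}) = t(s_y|q_i,\hat{u})/Z$. I would insert a triangle inequality to isolate the renormalization defect from the cell-discretization error,
\begin{align*}
|\hat{t}(y|q_i,\hat{u}) - t(y|q_i,\hat{u})| \leq t(s_y|q_i,\hat{u})\bigl(1 - \tfrac{1}{Z}\bigr) + |t(s_y|q_i,\hat{u}) - t(y|q_i,\hat{u})|.
\end{align*}
The second summand contributes at most $L\phi(\mathbb{Q})\delta$ by the workhorse estimate. For the first summand, a direct integration gives
\begin{align*}
\int_{\mathbb{Q}} t(s_y|q_i,\hat{u})\bigl(1 - \tfrac{1}{Z}\bigr)\,dy = Z - 1.
\end{align*}

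The main obstacle is therefore to show that the renormalization defect $Z - 1$ is itself $O(L\phi(\mathbb{Q})\delta)$. The key trick is to bring in that $t(\cdot|q_i,\hat{u})$ is a probability density on $\mathbb{X}$, so $\int_{\mathbb{Q}} t(y|q_i,\hat{u})\,dy \leq 1$. Subtracting this from $Z$ and applying the workhorse estimate once more gives
\begin{align*}
Z - 1 \leq \int_{\mathbb{Q}} \bigl(t(s_y|q_i,\hat{u}) - t(y|q_i,\hat{u})\bigr)\,dy \leq L\phi(\mathbb{Q})\delta.
\end{align*}
Summing the two contributions in the $Z \geq 1$ case yields the bound $2L\phi(\mathbb{Q})\delta$, and combining with the easy $Z < 1$ case completes the proof.
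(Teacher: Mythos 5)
Your proposal is correct and follows essentially the same route as the paper: both split on the two cases in the definition of $\hat{t}$, use the pointwise estimate $|t(s_y|q_i,\hat{u})-t(y|q_i,\hat{u})|\leq L\delta$ integrated over $\mathbb{Q}$, and exploit $\int_{\mathbb{Q}}t(y|q_i,\hat{u})\,dy\leq 1$ to bound the normalization defect $Z-1$ by $\phi(\mathbb{Q})L\delta$. The only difference is cosmetic — you add and subtract $t(s_y|q_i,\hat{u})$ in the triangle inequality where the paper adds and subtracts $Z\,t(y|q_i,\hat{u})$ inside $|t(s_y|q_i,\hat{u})-Z\,t(y|q_i,\hat{u})|/Z$ — and both yield the same $2\phi(\mathbb{Q})L\delta$ bound.
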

\begin{proof}
  If $\int_{\mathbb{Q}}t(s_z|s_x,\hat{u})dz< 1$, it follows from Assumption \ref{assumt} that
  \begin{eqnarray*}
  \int_{\mathbb{Q}}|\hat{t}(y|q_i,\hat{u})-t(y|q_i,\hat{u})|dy\leq \phi(\mathbb{Q})L\delta.
  \end{eqnarray*}
  And if $\int_{\mathbb{Q}}t(s_z|s_x,\hat{u})dz\geq 1$, we first have
  \begin{eqnarray*}
  &&0\leq\int_{\mathbb{Q}}t(s_y|q_i,\hat{u})dy-1\\
  &&\hspace{0.25cm}\leq \int_{\mathbb{Q}}t(s_y|q_i,\hat{u})dy-
  \int_{\mathbb{Q}}t(y|q_i,\hat{u})dy \\
  &&\hspace{0.25cm}\leq\int_{\mathbb{Q}}|t(s_y|q_i,\hat{u})
  -t(y|q_i,\hat{u})|dy\\
  &&\hspace{0.25cm}\leq\phi(\mathbb{Q})L\delta.
  \end{eqnarray*}
  Furthermore, we have
  \begin{eqnarray*}
  &&\int_{\mathbb{Q}}|\hat{t}(y|q_i,\hat{u})-t(y|q_i,\hat{u})|dy\\
  &&\hspace{-0.4cm}=\int_{\mathbb{Q}}\frac{|t(s_y|q_i,\hat{u})
  -t(y|q_i,\hat{u})\int_{\mathbb{Q}}t(s_z|s_x,\hat{u})dz|}
  {\int_{\mathbb{Q}}t(s_z|s_x,\hat{u})dz}dy\\
  &&\hspace{-0.4cm}\leq \int_{\mathbb{Q}}|t(s_y|q_i,\hat{u})
  -t(y|q_i,\hat{u})\int_{\mathbb{Q}}t(s_z|s_x,\hat{u})dz|dy\\
  &&\hspace{-0.4cm}\leq \int_{\mathbb{Q}}|t(s_y|q_i,\hat{u})-t(y|q_i,\hat{u})|dy+ \\
  &&|\int_{\mathbb{Q}}t(s_z|s_x,\hat{u})dz-1|\int_{\mathbb{Q}}|t(y|q_i,\hat{u})|dy\\
  &&\hspace{-0.4cm}\leq 2\phi(\mathbb{Q})L\delta.
  \end{eqnarray*}
  This completes the proof.
\end{proof}

\emph{Proof of Lemma \ref{Thedisfinite}}:
  First of all, let us prove the inequality (\ref{difV1}). It is easy to check it for $k=N$ since $V^*_{N,\mathbb{Q}}(x)=\hat{V}^*_{k,\mathbb{Q}}(x)=1, \forall x\in \mathbb{Q}$. By induction, we assume that $|V^*_{k+1,\mathbb{Q}}(x)-\hat{V}^*_{k+1,\mathbb{Q}}(x)|\leq \tau_{k+1}(\mathbb{Q})\delta$, $x\in \mathbb{Q}$. For any $q_i\in \mathbb{Q}_i$, $i\in \mathbb{N}_{[1,m_x]}$, we define $\mu^*_k=\arg\sup_{u\in\mathbb{U}}\int_{\mathbb{Q}}V^*_{k+1,\mathbb{Q}}(y)t(y|q_i,u)dy$
   and $\hat{\mu}^*_k=\arg \max_{\hat{u}\in \hat{\mathbb{U}}}\int_{\mathbb{Q}}\hat{V}^*_{k+1,\mathbb{Q}}(y)\hat{t}(y|q_i,\hat{u})dy$.
According to the dicretization procedure of the control space, we can choose some  $\hat{\nu}_k\in \hat{\mathbb{U}}$ such that $\|\mu^*_k-\hat{\nu}_k\|\leq \delta$.
 %and some $\nu_k\in \mathbb{U}$ such that  $\|\hat{\mu}^*_k-\nu_k\|\leq \delta$.
Then, we have that
  \begin{eqnarray*}
  &&\hspace{0cm}V^*_{k,\mathbb{Q}}(q_i)-\hat{V}^*_{k,\mathbb{Q}}(q_i) \\
  &&\hspace{-0.4cm} = \int_{\mathbb{Q}}V^*_{k+1,\mathbb{Q}}(y)t(y|q_i,\mu^*_k)dy -\int_{\mathbb{Q}}
\hat{V}^*_{k+1,\mathbb{Q}}(y)\hat{t}(y|q_i,\hat{\mu}^*_k)dy \\
 &&\hspace{-0.4cm} \leq \int_{\mathbb{Q}}V^*_{k+1,\mathbb{Q}}(y)t(y|q_i,\mu^*_k)dy -\int_{\mathbb{Q}}
\hat{V}^*_{k+1,\mathbb{Q}}(y)\hat{t}(y|q_i,\hat{\nu}_k)dy\\
&&\hspace{-0.4cm} \leq |\int_{\mathbb{Q}}V^*_{k+1,\mathbb{Q}}(y)t(y|q_i,\mu^*_k)dy -\int_{\mathbb{Q}}V^*_{k+1,\mathbb{Q}}(y)t(y|q_i,\hat{\nu}_k)dy|+ \\
&&\hspace{0cm}|\int_{\mathbb{Q}}V^*_{k+1,\mathbb{Q}}(y)t(y|q_i,\hat{\nu}_k)dy -\int_{\mathbb{Q}}V^*_{k+1,\mathbb{Q}}(y)\hat{t}(y|q_i,\hat{\nu}_k)dy|+\\
&&\hspace{0cm}|\int_{\mathbb{Q}}V^*_{k+1,\mathbb{Q}}(y)\hat{t}(y|q_i,\hat{\nu}_k)dy -\int_{\mathbb{Q}}\hat{V}^*_{k+1,\mathbb{Q}}(y)\hat{t}(y|q_i,\hat{\nu}_k)dy|\\
&&\hspace{-0.4cm}  \leq\phi(\mathbb{Q})L\delta + 2\phi(\mathbb{Q})L\delta+\tau_{k+1}(\mathbb{Q})\delta\\
&&\hspace{-0.4cm}=(3\phi(\mathbb{Q})L+\tau_{k+1}(\mathbb{Q}))\delta,
  \end{eqnarray*}
and
  \begin{eqnarray*}
  &&\hspace{0cm}\hat{V}^*_{k,\mathbb{Q}}(q_i)-V^*_{k,\mathbb{Q}}(q_i)\\
  &&\hspace{-0.4cm} \leq \int_{\mathbb{Q}}\hat{V}^*_{k+1,\mathbb{Q}}(y)\hat{t}(y|q_i,\hat{\mu}^*_k)dy -\int_{\mathbb{Q}}V^*_{k+1,\mathbb{Q}}(y)t(y|q_i,\hat{\mu}^*_k)dy\\
  &&\hspace{-0.4cm} \leq |\int_{\mathbb{Q}}\hat{V}^*_{k+1,\mathbb{Q}}(y)\hat{t}(y|q_i,\hat{\mu}^*_k)dy
  -\int_{\mathbb{Q}}\hat{V}^*_{k+1,\mathbb{Q}}(y)t(y|q_i,\hat{\mu}^*_k)dy|+ \\
  && |\int_{\mathbb{Q}}\hat{V}^*_{k+1,\mathbb{Q}}(y)t(y|q_i,\hat{\mu}^*_k)dy
  -\int_{\mathbb{Q}}V^*_{k+1,\mathbb{Q}}(y)t(y|q_i,\hat{\mu}^*_k)dy|\\
  &&\hspace{-0.4cm}\leq (2\phi(\mathbb{Q})L+\tau_{k+1}(\mathbb{Q}))\delta.
  \end{eqnarray*}
  Thus, we have
  \begin{eqnarray*}
  |V^*_{k,\mathbb{Q}}(q_i)-\hat{V}^*_{k,\mathbb{Q}}(q_i)|\leq (3\phi(\mathbb{Q})L+\tau_{k+1}(\mathbb{Q}))\delta.
  \end{eqnarray*}
  For any $x\in \mathbb{Q}_i$, $i\in \mathbb{N}_{[1,m_x]}$, it follows that
  \begin{eqnarray*}
   &&|V^*_{k,\mathbb{Q}}(x)-\hat{V}^*_{k,\mathbb{Q}}(x)|\\
   &&\hspace{-0.4cm}=|V^*_{k,\mathbb{Q}}(x)-\hat{V}^*_{k,\mathbb{Q}}(q_i)|\\
   &&\hspace{-0.4cm}\leq|V^*_{k,\mathbb{Q}}(x)-V^*_{k,\mathbb{Q}}(q_i)|+|V^*_{k,\mathbb{Q}}(q_i)-\hat{V}^*_{k,\mathbb{Q}}(q_i)|\\
   &&\hspace{-0.4cm}\leq  (4\phi(\mathbb{Q})L+\tau_{k+1}(\mathbb{Q}))\delta= \tau_{k}(\mathbb{Q})\delta,
  \end{eqnarray*}
  which completes the proof of the inequality (\ref{difV1}).

\section*{Appendix C. Proof of Theorem \ref{infinitenecessary}}\label{AppC}
 Let $u_x$ be the control input such that (\ref{Ginf}) holds for any $x\in \mathbb{Q}$.

\emph{Only-if-part}: Under Assumption~\ref{infiniteassum}, the fact that the set $\mathbb{Q}\in\mathcal{B}(\mathbb{X})$ is an infinite-horizon $\epsilon$-PCIS is equivalent to $G^*_{\infty,\mathbb{Q}}(x)\geq \epsilon, \forall x\in \mathbb{Q}$. Let $\theta=\sup_{x\in\mathbb{Q}}G^*_{\infty,\mathbb{Q}}(x)$. Under Assumption~\ref{infiniteassum},  $G^*_{\infty,\mathbb{Q}}(x)$ exists for all $x\in \mathbb{Q}$. The set $\tilde{\mathbb{Q}}_f=\{x\in \mathbb{Q}\mid G^*_{\infty,\mathbb{Q}}(x)=\theta\}$ collects all the states for which the value of $G^*_{\infty,\mathbb{Q}}$ is maximal over the set $\mathbb{Q}$. Extending Lemma \ref{finiteuniver} to infinite-horizon case, we have that the set $\tilde{\mathbb{Q}}_f$ is universally measurable. By Lemma~7.16 in \cite{Bertsekas(2004)}, we have that for any $p\in \mathcal{P}(\mathbb{X})$, there exists a Borel-measurable set $\mathbb{Q}_f\subseteq \mathbb{Q}$ such that $p(\mathbb{Q}_f\bigtriangleup \tilde{\mathbb{Q}}_f)=0$.

 Next we will show that the set $\mathbb{Q}_f$ is an RCIS. It follows from Assumption~\ref{infiniteassum} and  Lemma~\ref{TheAbata2} that $\forall x\in \mathbb{Q}_f$,
\begin{eqnarray}
  &&\hspace{-0.8cm}G^*_{\infty,\mathbb{Q}}(x)\nonumber\\
  &&\hspace{-1.2cm}=\int_{\mathbb{Q}_f}G^*_{\infty,\mathbb{Q}}(y)T(dy|x,u_x)
  +\int_{\mathbb{Q}\setminus\mathbb{Q}_f}G^*_{\infty,\mathbb{Q}}(y)T(dy|x,u_x) \nonumber\\
  &&\hspace{-1.2cm}=G^*_{\infty,\mathbb{Q}}(x)\int_{\mathbb{Q}_f}T(dy|x,u_x)
  +\nonumber \\
  && \hspace{3cm}\int_{\mathbb{Q}\setminus\mathbb{Q}_f}G^*_{\infty,\mathbb{Q}}(y)T(dy|x,u_x) \label{Qfdef1}\\
  &&\hspace{-1.2cm}\leq G^*_{\infty,\mathbb{Q}}(x)T(\mathbb{Q}_f|x,u_x)
  +G^*_{\infty,\mathbb{Q}}(x)T(\mathbb{Q}\setminus\mathbb{Q}_f|x,u_x)\label{Qfdef2}\\
  &&\hspace{-1.2cm}=  G^*_{\infty,\mathbb{Q}}(x)(T(\mathbb{Q}_f|x,u_x)+T(\mathbb{Q}\setminus\mathbb{Q}_f|x,u_x)),\nonumber
  \end{eqnarray}
where Eq. (\ref{Qfdef1}) follows from $G^*_{\infty,\mathbb{Q}}(x)=G^*_{\infty,\mathbb{Q}}(y), \forall x, y\in \mathbb{Q}_f$ and  Eq. (\ref{Qfdef2}) follows from that $G^*_{\infty,\mathbb{Q}}(x)>G^*_{\infty,\mathbb{Q}}(y), \forall x\in {Q}_f, \forall y\in \mathbb{Q}\setminus\mathbb{Q}_f$. Furthermore, since $G^*_{\infty,\mathbb{Q}}(x)\geq \epsilon > 0, \forall x\in \mathbb{Q}$, and  $0\leq T(\mathbb{Q}|x,u_x)\leq 1$, the equality in Eq. (\ref{Qfdef2}) holds if and only if $T(\mathbb{Q}_f|x,u_x)= 1$ and thereby $T(\mathbb{Q}\setminus\mathbb{Q}_f|x,u_x))=0$. Based on the recursion in (\ref{Ginf1}), we have $G^*_{\infty,\mathbb{Q}}(x)=1, \forall x\in \mathbb{Q}_f$. Hence, the set $\mathbb{Q}_f\subseteq \mathbb{Q}$ is an RCIS.

Next let us prove that  $\forall x\in \mathbb{Q}\setminus\mathbb{Q}_f$, Eq.(\ref{Eq:Ginfupperbound}) holds. That is to prove that
 \begin{eqnarray}
 &&\hspace{-0.8cm} G^*_{\infty,\mathbb{Q}}(x)\leq T(\mathbb{Q}_f|x,u_x)
    +\int_{\mathbb{Q}\setminus\mathbb{Q}_f}T(\mathbb{Q}_f|y,u_y)T(dy|x,u_x)\nonumber\\
  &&\hspace{4cm}  +\frac{\rho^2}{1-\rho}.\label{Eq:Ginfupperbound1}
 \end{eqnarray}
 By Theorem~7 in \cite{Abate(2007)}, the control input $u_x$ is also optimal to the recursion (\ref{Ginf1}). For all $k\in \mathbb{N}$,  we have $\forall x\in \mathbb{Q}_f$,  $G^*_{k,\mathbb{Q}}(x)=1$ and $\forall x\in \mathbb{Q}\setminus\mathbb{Q}_f$,
\begin{eqnarray*}
G^*_{k+1,\mathbb{Q}}(x)=T(\mathbb{Q}_f|x,u_x)
    +\int_{\mathbb{Q}\setminus\mathbb{Q}_f}G^*_{k,\mathbb{Q}}(y)T(dy|x,u_x).
\end{eqnarray*}
Let $\rho=\sup_{x\in \mathbb{Q}\setminus\mathbb{Q}_f}\int_{\mathbb{Q}\setminus\mathbb{Q}_f}T(dy|x,u_x)$. Note that  $0\leq \rho<1$. Then, $\forall x\in \mathbb{Q}\setminus\mathbb{Q}_f$, we can follow the induction rule to prove that
 \begin{eqnarray*}
 &&\hspace{-0.8cm}  G^*_{k,\mathbb{Q}}(x)\leq T(\mathbb{Q}_f|x,u_x)
    +\int_{\mathbb{Q}\setminus\mathbb{Q}_f}T(\mathbb{Q}_f|y,u_y)T(dy|x,u_x)\nonumber\\
  &&\hspace{4cm}  +\frac{\rho^2-\rho^k}{1-\rho},
 \end{eqnarray*}
which by taking limitation yields that (\ref{Eq:Ginfupperbound1}) holds.

\emph{If-part}: The proof for the existence of an RCIS $\mathbb{Q}_f\subseteq \mathbb{Q}$ is the same as that of the only if part. As shown above, the condition $T(\mathbb{Q}_f|x,u_x)=1$ is equivalent to $G^*_{\infty,\mathbb{Q}}(x)=1, \forall x\in \mathbb{Q}_f$. We can use induction to prove that  $\forall x\in \mathbb{Q}\setminus\mathbb{Q}_f$,
 \begin{eqnarray*}
 &&\hspace{-0.8cm}  G^*_{k,\mathbb{Q}}(x)\geq T(\mathbb{Q}_f|x,u_x)
    +\int_{\mathbb{Q}\setminus\mathbb{Q}_f}T(\mathbb{Q}_f|y,u_y)T(dy|x,u_x),
 \end{eqnarray*}
which further implies that $G^*_{\infty,\mathbb{Q}}(x)\geq T(\mathbb{Q}_f|x,u_x)
    +\int_{\mathbb{Q}\setminus\mathbb{Q}_f}T(\mathbb{Q}_f|y,u_y)T(dy|x,u_x)$.
 One sufficient condition to guarantee  $G^*_{\infty,\mathbb{Q}}(x)\geq \epsilon$ is (\ref{Eq:Ginflowerbound}), i.e., $T(\mathbb{Q}_f|x,u_x)
    +\int_{\mathbb{Q}\setminus\mathbb{Q}_f}T(\mathbb{Q}_f|y,u_y)T(dy|x,u_x)\geq \epsilon$.
  The proof is completed.

\section*{Appendix D. Proof of Corollary~\ref{Cor:infinitenecessary}}
By Lemma~\ref{TheAbata2} and  Theorem \ref{infinitenecessary}, the necessary condition in Corollary~\ref{Cor:infinitenecessary} can be proven by showing that  $\forall x\in \mathbb{Q}\setminus\mathbb{Q}_f$, there exists a $u\in \mathbb{U}$ such that
\begin{eqnarray}
  &&\hspace{-1.5cm}\epsilon\leq G^*_{\infty,\mathbb{Q}}(x)=\int_{\mathbb{Q}_f}G^*_{\infty,\mathbb{Q}}(y)T(dy|x,u)+\nonumber\\
  &&\hspace{0.8cm}\int_{\mathbb{Q}\setminus\mathbb{Q}_f}G^*_{\infty,\mathbb{Q}}(y)T(dy|x,u) \nonumber\\
  &&\hspace{0.5cm}\leq T(\mathbb{Q}_f|x,u)
  +T(\mathbb{Q}\setminus\mathbb{Q}_f|x,u)\label{conditions2}\\
  &&\hspace{0.5cm}=  T(\mathbb{Q}|x,u), \nonumber
  \end{eqnarray}
  where Eq. (\ref{conditions2}) follows from $0< G^*_{\infty,\mathbb{Q}}(x)\leq 1, \forall x\in \mathbb{Q}$.

The sufficient condition in Corollary~\ref{Cor:infinitenecessary} can be proven by showing that $\forall x\in \mathbb{Q}\setminus\mathbb{Q}_f$, there exists a $u\in \mathbb{U}$
 \begin{eqnarray}
  &&\hspace{-0.3cm}G^*_{\infty,\mathbb{Q}}(x)\nonumber\\
  &&\hspace{-0.7cm}=\int_{\mathbb{Q}_f}G^*_{\infty,\mathbb{Q}}(y)T(dy|x,u)
  +\int_{\mathbb{Q}\setminus\mathbb{Q}_f}G^*_{\infty,\mathbb{Q}}(y)T(dy|x,u) \nonumber\\
  &&\hspace{-0.7cm}\geq T(\mathbb{Q}_f|x,u)
  +\epsilon T(\mathbb{Q}\setminus\mathbb{Q}_f|x,u),\label{conditions22}
  \end{eqnarray}
  where Eq. (\ref{conditions22}) follows from $G^*_{\infty,\mathbb{Q}}(x)\geq \epsilon >0, \forall x\in \mathbb{Q}$. One sufficient condition to guarantee  $G^*_{\infty,\mathbb{Q}}(x)\geq \epsilon$ is $T(\mathbb{Q}_f|x,u)
  +\epsilon T(\mathbb{Q}\setminus\mathbb{Q}_f|x,u)\geq \epsilon$.
  The proof is completed.

\end{document}